\newtheorem{thm}{Theorem}[section]
\newtheorem{cor}[thm]{Corollary}
\newtheorem{lem}[thm]{Lemma}
\newtheorem{prop}[thm]{Proposition}
\newtheorem{conj}[thm]{Conjecture}
\newtheorem{alg}[thm]{Algorithm}
\theoremstyle{definition}
\newtheorem{defn}[thm]{Definition}
\newtheorem{examples}[thm]{Examples}
\newtheorem{example}[thm]{Example}
\newtheorem{construction}[thm]{Construction}
\newtheorem{prob}[thm]{Problem}
\newtheorem{question}[thm]{Question}
\theoremstyle{remark}
\newtheorem{rem}[thm]{Remark}
\numberwithin{equation}{section}
\title[Ring learning with errors]{Ring learning with errors: a crossroads between postquantum cryptography, machine learning and number theory}
\author{\sc Iv\'an Blanco-Chac\'on}
\address{Department of Mathematics, School of Science\\
University of Alcal\'a de Henares\\
Ctra. Madrid-Barcelona Km. 33,600\\
Alcal\'a de Henares, Spain}
\email{ivan.blancoc@uah.es}
\thanks{Partially supported by Science Foundation Ireland 13/IA/1914 and MTM2016-79400-P.}
\keywords{Ring Learning With Errors, Postquantum cryptography, Lattice based cryptography, Applied Number Theory, Cyclotomic polynomials, Condition number.}
\begin{document}
\renewcommand\baselinestretch{1.2}
\renewcommand{\arraystretch}{1}
\def\base{\baselineskip}
\font\tenhtxt=eufm10 scaled \magstep0 \font\tenBbb=msbm10 scaled
\magstep0 \font\tenrm=cmr10 scaled \magstep0 \font\tenbf=cmb10
scaled \magstep0


\def\evenhead{{\protect\centerline{\textsl{\large{I. Blanco}}}\hfill}}

\def\oddhead{{\protect\centerline{\textsl{\large{On the non vanishing of the cyclotomic $p$-adic $L$-functions}}}\hfill}}

\pagestyle{myheadings} \markboth{\evenhead}{\oddhead}

\thispagestyle{empty}

\maketitle

\begin{abstract}The present survey inteds to serve as a comprehensive account of the main areas of the cryptography based on the Ring Learning With Errors Problem. We cover the major topics, from their mathematical foundations to the main primitives, as well as several open ends and recent progress with an emphasis in the connections with algebraic number theory. This work is based to a certain extent on an invited course and a seminar given by the author at the Basque Center for Applied Mathematics in 2018 and at the ICIAM 2019. Our aim is to provide an introduction to the topic for graduate students with a background in algebra or number theory.
\end{abstract}

\bigskip
\section{Introduction}

According to MIRACL Labs, it is estimated that a quantum computer capable of breaking most of modern cryptography will be built in the next 10-15 years (20-25 years according to estimates made public in the last NIST call for the standardisation of postquantum primitives). All of cryptography is built on supposedly \emph{hard}\footnote[1]{In a sense which will be made clear in Section 2.} mathematical problems, most of which,  like integer factorisation or the discrete logarithm problem, become relatively easy in the context of a working quantum computer. In response to this threat there is a need to migrate from these vulnerable constructs to constructs known to remain strong even in a post-quantum world. 

An example of such a hard problem is the shortest vector problem in general lattices, which is known to be NP-hard (at least for a very small approximation factor). While there already exist post-quantum solutions for much of standard cryptography, like public key encryption and digital signature, it is currently unclear how some of the more elaborate protocols, like those seeking for integrity or non-repudiation can be successfully migrated. In particular in the last 10+ years bilinear pairings on elliptic curves have opened up many new possibilities, which might likely be rendered insecure in a postquantum world. Already commercial products based on bilinear pairings have found applications in the `real world', and so much work must be done to ensure that we will be able to retain this functionality into the future.

At the same time there is much fundamental work to be done on the postquantum primitives themselves. A major decision is to choose between one or various of the following  technologies, for each security/integrity demand:

\begin{itemize}
\item[a)]Code based cryptography (\cite{oberbeck}) is built on the infeasibility of syndrome decoding for general linear error-correcting codes over finite fields.

\item[]

\item[b)]  Multivariate based cryptography (\cite{ding}) is based on the fact that solving general systems of multivariate polynomial equations over finite fields is proved to be NP-hard.

\item[]

\item[c)] Supersingular isogeny based cryptography (\cite{leo}), is a protocol for key exchange, analogous to Diffie-Hellman, but the cyclic groups present here are attached to supersingular elliptic curves defined over finite fields.

\item[]

\item[d)] Finally, lattice based cryptography, admits a large number of different formulations and constructions. This report focuses on one of the most promising lattice-based technologies: Ring Learning With Errors (RLWE). This scheme is based on the RLWE problem, which is based in turn on the difficulty of solving the shortest vector problem (SVP) on ideal lattices. 
\end{itemize}

At the time of writing, code, lattice and multivariate-based methods seem to be the strongest contenders, as they appear to have the flexible structure needed on which to base more complex protocols. Within these three categories, the lattice-based one has by far a larger number of non-broken primitives/protocols.  

Lattice-based cryptography has a relatively mature history, primarily due to the work done by the early proponents of the related NTRU cryptosystem (\cite{ntru}). This was a patented technology which enjoyed some minor success, but never really gained traction, as when it was invented, a quantum computer still seemed very far off. Its patents have now expired.

RLWE first came to prominence with the paper by Lyubashevsky, Peikert and Regev (\cite{LPR}). A key-exchange algorithm proposed by them has been recently optimized and implemented by Alkim, Ducas, P\"oppelmann, and Schwabe (\cite{ADPS}). This has been implemented by Google in a well-publicised experiment (\cite{google}). In recent times there have been many implementation improvements, see for example the recent paper by Scott (\cite{scott}). So there can be no doubting the practicality of the technology, opinions supporting this view include those of a good number of researchers in Intel Labs and MIRACL Labs.

RLWE is built on an earlier scheme: the Learning with Errors (LWE) problem, which admits a security reduction from the SVP on arbitrary lattices, but with a much larger approximation factor than the one for which SVP is proved to be NP-hard. Of course, this is not a formal  hardness guarantee for LWE but it can be regarded as a clue of its strength. Moreover, no polynomial-time attack has been found against LWE yet. 

The main disadvantage of LWE is a quadratic overhead in the key sizes, which is overcome in the RLWE scenario, at the cost of being backed in the SVP over just ideal lattices, which even if based on experience is widely believed to be intractable, there is no formal proof at the moment, and for no appromation factor. 

In spite of that, the RLWE variant appears to be eminently practical: like most post-quantum proposals, RLWE key sizes are much larger than those of non post-quantum methods, but the required computing power is usually much smaller. For example while an elliptic curve based cryptosystem might use keys of 256 bits, an equivalent system based on RLWE might require keys of 4096 bits to grant the same security level, while running maybe 10-100 times faster (\cite{leclerq}). These differences might be seen as balancing each other out. Furthermore, 30\% of the surviving proposals for the NIST are based on RLWE.

Our report is structured as follows: 

In section 2 we provide a quick introduction to the different features of cryptography and introduce the main terms and facts on complexity as they show up in the literature. We provide several examples, elaborating on those presented in the course by the author. 

In section 3 we expose the main concepts of lattice-based cryptography. We focus on the classical LWE, over which RLWE is built and discuss its advantadges and drawbacks, as well as different attacks against weak instantiations, which will be exploited in the RLWE scenario in Section 6.

Section 4 is a quick overview of a few key concepts in algebraic number theory: rings of integers, canonical embedding, and other topics. These pieces make the foundations of RLWE, but the reader who is familiar with this material can safely skip it. 

Section 5 introduces the RLWE problem in its various formulations. In particular, we carefully discuss the Polynomial Learning With Errors problem (PLWE), which appeared in the literature before RLWE (\cite{stehle2}). We discuss the equivalence between both problems and explain some recent advances in this topic: in particular we comment on recent work by the author (\cite{blanco}) which gives a partial answer in the cyclotomic case, the most interesting from a cryptographic point of view. Besides, we explain the hardness result which backs RLWE and describe in full detail the LPR crypstosystem, as presented in \cite{LPR}. We close the chapter by presenting a key exchange protocol based in RLWE (\cite{dingkex}).

Section 6  is a summary of several attacks against the RLWE cryptosystem. They reduce to LWE or to PLWE attacks and allow to discard insecure choices of parameters. The search for secure instantiations motivates some number theoretical problems and conjectures which we also discuss.

Section 7 is for RLWE-based digital signatures and homomorphic encryption, a functionality which is gaining much interest nowadays, since it allows to solve a good number of logistic and security problems in cloud computing and storing. We close the survey by discussing in detail some (second round) NIST figures. 

A couple of remarks to end this introduction: first, by a \emph{polynomial time algorithm} we mean an algorithm for which there exists a polynomial $p(x)\in\mathbb{R}[x]$ and a \emph{size} function on the family of the algorithm inputs $x(n)$, such that the time it takes to run the algorithm on input $n$ is $p(x(n))$. Second, we will use sometimes the $\tilde{O}$-notation: a function $f(x)$ is $\tilde{O}(g(x))$ if it is $O(g(x)log^k(x))$ for some $k$.

\textbf{Acknowledgements:} The author thanks Gary McGuire for carefully reading a preliminary version of this survey, to Mike Scott for providing most of the practical highlights on RLWE and to the Basque Center for Applied Mathematics for their invitation to give this course and to take part in the postquantum cryptography mini-symposum at ICIAM 2019. Active and insightful discussion with the audience of the course and seminar, and in particular with Sebasti\'a Xamb\'o set the author to write this work.

\section{Post-quantum cryptography}

\subsection{Cryptography features}
Requirements such as confidentiality and proofs of identity are crucial in electronic financial and legal transactions, while some other features like non-repudiation or operating on encrypted data (homomorphic encryption) are gaining much traction within the last few years. We examine here most of these functionalities.

The best known cryptographic problem is confidentiality. This is attained by the use of well-designed encryption/decription schemes. 

To start with, we fix a finite alphabet $\mathcal{A}$, with some mathematical structure such as an abelian group or a field (e.g. the finite field $\mathbb{F}_q$ for $q=p^t$ and $p$ prime, or an elliptic curve over this field). We consider three sets $\mathcal{K}\subset \mathcal{A}^n$ (keys), $\mathcal{M}\subset \mathcal{A}^N$ (plaintexts) and $\mathcal {C}\subset \mathcal{A}^N$ (ciphertexts) with $n\ll N$. Finally, we consider a set $\Lambda\subset\mathbb{N}$ which parametrizes the level of security, i.e., the larger $\lambda\in\Lambda$, the safer the scheme. 

\begin{defn}[Cipher schemes] A cipher over $(\mathcal{K},\mathcal{M},\mathcal{C})$ is a family of pairs of efficiently computable functions $\{(E_{\lambda},D_{\lambda})\}_{\lambda\in\Lambda}$ where for each $\lambda$, $E_{\lambda}: \mathcal{K}\times \mathcal{M}\to \mathcal{C}$ (encryption function) and $D_{\lambda}: \mathcal{K} \times \mathcal{C}\to \mathcal{M}$ (decryption function) are such that for each key $k\in \mathcal{K}$ and for each plaintext $m\in \mathcal{M}$, the following correctness property holds:
$$
D_{\lambda}(k,E_{\lambda}(k,m))=m.
$$
That is to say, decryption undoes encryption.
\end{defn}
Efficiently computable means that both $E_{\lambda}$ and $D_{\lambda}$ can be computed by an algorithm which is polynomial in the security parameter $\lambda$, i.e., there exist polynomials $p(x), q(x)\in\mathbb{R}[x]$ only depending on the scheme, such that for each $\lambda\in\Lambda$, $k\in\mathcal{K}$, $m\in \mathcal{M}$, and $c\in \mathcal{C}$, the number of steps to compute $E_{\lambda}(k,m)$ (resp. $D_{\lambda}(k,c)$) is upper bounded by $p(\lambda)$ (resp. $q(\lambda)$). Moreover, the algorithm for $E_{\lambda}$ can be probabilistic, while $D_{\lambda}$ should always be deterministic. 

Since in our definition both the encryption and decryption parties have the same key (i.e., the scheme is symmetric), they should agree beforehand on that key somehow. For instance, they might do it physically in a secret meeting but they can also use a digital key exchange protocol. As usual, any arbitrary legitimate sender (receiver) will be called Alice (Bob), and any arbitrary eavesdropper will be called Eve.

\begin{defn}[Key exchange protocol] A key exchange protocol is an efficient method for Alice and Bob to agree on a key through a (potentially non-safe) channel. One of the most famous protocols is Diffie-Hellman's (DH)\footnote{There are more general versions of the Diffie-Hellman problem, not known to be equivalent to a discrete logarithm problem, but here we stick to its version over finite fields, which by construction is so.}, where Alice and Bob start by agreeing on a finite feld $\mathbb{F}_q$ and a primitive root $g$, namely, a generator of the cyclic multiplicative group $\mathbb{F}_q^*$. The pair $(q,g)$ is made public, and to agree on a private key,  Alice selects an integer $a$ and Bob selects an integer $b$. Then, Alice sends $g^a$ modulo $q$ to Bob, who on receiving it, raises it to $b$ modulo $q$, getting $g^{ab}$ modulo $q$. Next, Bob sends $g^b$ modulo $q$ to Alice, who raises it to $a$, obtaining also $g^{ab}$ modulo $q$, the agreed private key.
\end{defn}

Notice that without knowledge of $a$ or $b$, Eve cannot obtain $g^{ab}$ from $g^a$ and $g^b$ in an efficient manner (on a classic computer!), the main obstruction being the unfeasibility of the discrete logarithm, namely, to obtain $a$ from $g^a$ modulo $q$, if $g$ is known. Nowadays, a combined usage of Diffie-Hellman (or some variant) with a suitable symmetric cipher is used in most internet protocols, like TLS or TCP/IP. A variant of DH is ECDH, where the multiplicative group $\mathbb{F}_q^{*}$ is replaced by the additive group of an elliptic curve over $\mathbb{F}_q$.

\begin{defn}[Digital signatures] A signature scheme is a pair $(G,D)$, where $G:\Lambda\to \mathcal{K}$ is an efficient key generating probabilistic algorithm, and $D=\{ (S_{\lambda},V_{\lambda}) \}_{\lambda\in\Lambda}$ is a family of pairs of efficiently computable\footnote{by polynomial-time probabilistic algorithms.} functions $S_{\lambda}:\mathcal{K}\times \mathcal{M}\to \mathcal{T}\subseteq \mathcal{A}^r $ (space of \emph{tags}) and $V_{\lambda}: \mathcal{K}\times \mathcal{M} \times \mathcal{T}\to \{0,1\}$ such that whenever $k_s$ (secret key) and $k_p$ (public key) are sampled from $G$ on security level $\lambda$, then, for every message $m\in \mathcal{M}$:
$$
pr\left[ V_{\lambda}(k_p,m,S_{\lambda}(k_s,m))=1\right]=1,
$$
\end{defn}
For a security level $\lambda$, $S_{\lambda}$ is called the signature function and $V_{ \lambda}$ the verification function, which returns 1 if the signature is valid and 0 otherwise, and the correctness of the scheme means that on a message $m$ and a secret key $k_s$, the signature function produces a tag $S_{\lambda}(k_s,m)$, which is verified as valid by the verifying function with probability $1$, given the message $m$ and the public key $k_p$. This scheme provides a proof that the message was signed by a known signatory (authentication) and the signatory cannot deny having signed the message (non-repudiation). Classic designs of digital signature schemes include Rabin's algorithm, Lamport schemes and Merkle trees, as well as RSA-based protocols (\cite{cryptostanford} 13.3.1).

Integrated Encryption Schemes (signcryption schemes) implement both encryption and authentication. Two of the most commonly used are ECIES, which operates with elliptic curves and  DLIES, which operates over $\mathbb{F}_q$.

\begin{defn}[Homomorphic encryption\footnote{From now on, to ease notation, we will omit the $\lambda$-subscripts unless it results in ambiguity.} ] Let $(E,D)$ be a cipher over $(\mathcal{K},\mathcal{M},\mathcal{C})$ where $\mathcal{M}$ and $\mathcal{C}$ are abelian groups under the operations $*_{\mathcal{M}}$ and $*_{\mathcal{C}}$ respectively. The cipher is said to be homomorphic if for each key $k\in \mathcal{K}$ and plaintexts $m_1,m_2\in \mathcal{M}$, it is
$$
E(k,m_1)*_{\mathcal{C}}E(k,m_2)=E(k,m_1*_{\mathcal{M}}m_2).
$$
\end{defn}

\begin{example}RSA encryption is homomorphic. Indeed, for an RSA integer $N=pq>1$ and an exponent $e$ modulo $\phi(N)$ with inverse $d$, encryption goes as $x\mapsto x^e\pmod{N}$, which clearly commutes with the product modulo $N$, but not with the sum.
\end{example}

When in addition, $\mathcal{M}$ and $\mathcal{C}$ have ring structure and encryption commutes with both ring operations, the cipher is said to be fully homomorphic (FHE). Notice that RSA is not fully homomorphic.

Homomorphic encryption allows to perform operations on the plaintext by operating directly on the ciphertexts, i.e., without decrypting first. This is relevant when the operations are outsourced and performed over a non-trustable server. Applications of homomorphic encryption include encrypted database queries, cloud computing, genetic computing, health data management or outsourced generation of blockchain addresses.

\subsection{P, NP, NP-hard and NP-complete} The author has often seen that the terms \emph{intractable}, \emph{unfeasible}, and \emph{hard}, are used in the postquantum cryptography literature in a rather loose (at best!) manner and this may lead to believe that certain computational problems enjoy certain complexity guarantees that they simply have not. We make here precise the main terms that usually appear in the problems which back lattice cryptography.

\begin{defn}[The P and NP classes] The P class consists of the decission problems whose solution can be found on a deterministic Turing machine in polynomial time in the input size. The NP class consists of  the decission problems for which a putative solution can be checked to be a real solution or not in polynomial time on a deterministic Turing machine on the input size. 
\label{defnp}
\end{defn}

Equivalently, the NP-class consists of the decission problems such that a solution can be found in polynomial time on a non-deterministic Turing machine:  indeed, assuming Definition \ref{defnp} for the NP class, an algorithm based on a non-deterministic Turing Machine can be built in two steps; the first is a non-determininstic guess about the solution, and the second consists of a polynomial deterministic algorithm that verifies if the guess is a solution(cf. \cite{alsu} pag. 283 for details). A common misconception is that the \emph{NP} term stands for \emph{non-polynomial} when in fact it stands for \emph{non-deterministic polynomial acceptable problems}.

A note of caution: as we pointed out at the end of the introduction, the term \emph{in polynomial time} means that the time it takes to solve a problem is, on input $n$, upper bounded by a polynomial in $x(n)$ where $x$ is a \emph{size} function. The most used size function is the logarithm, as we can regard it, essentially as the number of digits, a \emph{true size} of the input. Hence, a brute force attack on DLP for $\mathbb{F}_p$ takes $p-1$ powers and checks, which is polynomial in $p$ but exponential in $\log(p)$. There are classical (non-quantum) algorithms which drastically reduce the order, like the number field sieve (subexponential), but none of them is polynomial in $\log(p)$. We refer the reader to Chapter 4 of this work for a summary on number fields and their key properties and to \cite{nfs} for an exposition of the number field sieve method.

\begin{example}The problem of primality testing, i.e. deciding whether a positive integer is prime or not  is NP: indeed, given a natural number $n>1$ and $b\leq n$, the Euclidean algorithm can be used to check if $b\mid n$ in approximalety $\log(b)$ operations. Moreover, in a major breakthrough, Agrawal, Kayal and Saxena proved that primality testing is also a P problem. 
\end{example}

\begin{example}The problem of factoring, namely to return a proper factorisation $n=pq$ with $1<p,q<n$ of an input $n\in\mathbb{N}$ is also NP: a pair $(p,q)\in\mathbb{N}^2$ can be checked to be (or not) a non trivial factorisation of $n$ by performing approximately $\log(q)^2$ multiplications, if $q\geq p$. 
\end{example}

Two celebrated algorithms due to Peter Shor solve the factoring problem and the DLP in polynomial time on a quantum computer (\cite{shor}). To factor a positive integer $n$, Shor's algorithm runs over all the integers in the range $\{1,..,n\}$. For $1 < a < n$, if $a$ is a unit modulo $n$, the algorithm calls a sub-routine to compute the order of $a$ modulo $n$. With this period, the algorithm produces a non-trivial factor of $n$ with arbitrarily large probability in polynomial time. The order-finding sub-routine is run on a quantum computer, but the use of the order to produce a factor is classical.

In fairness, this does not mean that the problem of factoring is in the P-class, as a (probabilistic) quantum algorithm is not equivalent, in general, to a Turing or sequential machine. 

\begin{defn}[Reduction] We say that a problem A admits a reduction to a problem B if any instance of A can be transformed to an instance of B in polynomial time, namely, if solving B suffices for solving A with the same order of complexity.\footnote{By \emph{order of complexity} we mean \emph{polynomial, superpolynomial, subexponential} and \emph{exponential}. We stick to these orders as they are enough for our analysis.}
\end{defn}

Informally, NP-hard and NP-complete problems are those at least as hard as those in the NP-class, but while NP-complete problems belong to NP, NP-hard ones need not to. More precisely:

\begin{defn} The NP-hard class consists of those problems A such that every problem in NP can be reduced to A in polynomial time. The NP-complete class consists of those NP problems which are NP-hard.
\end{defn}

\begin{example} The prime factorisation problem, i.e. to return all the prime factors with multiplicity of an input $n\geq 1$, is clearly NP: checking if a putative solution is a prime factorisation of $n$ can be done in (deterministic) polynomial time. However it is not known if the prime factorisation is NP-hard (and hence NP-complete). It is expected, moreover, not to be in the P class.
\end{example}

 


So, a quantum computer would render insecure both RSA and Diffie-Hellman. Even more, Tate and Weil's pairings allow to reduce ECDLP to DLP (\cite{menezes}), a reduction which is even polynomial (although probabilistic) on supersingular curves, hence, the elliptic version on Diffie-Hellman should also be avoided in a post-quantum scenario. This is a reason to consider schemes which use pairing-free abelian varieties, hence other than elliptic curves. Jacobians of hyperelliptic curves are known to be good candidates but beyond genus 3, the complexity of finding explicit equations and explicit computations for the addition law render them unfeasible.

Finally, another well-known problem is whether $P\neq NP$ or not. If equality held, all cryptographic (classic and postquantum) primitives based on NP problems would be useless. On the contrary, if, as it is widely believed, $P\neq NP$, then every NP-hard problem would be non-polynomial, hence suitable for cryptography: indeed, if $\Lambda$ is NP-hard, in case $P\neq NP$, take $B$ in $NP\setminus P$. Then $\Lambda$ cannot be polynomial (otherwise, $B$ would be so).

But for the moment, lacking a proof of $P\neq NP$, all we can say is that NP-hard problems are \emph{strongly} expected to be suitable for (postquantum) cryptography.

\section{Lattice based cryptography}The security of lattice-based schemes relies on two problems which are expected to be intractable on a quantum computer, as we explain next. By length, we mean Euclidean length, denoted $||\phantom{a}||$. 

\begin{defn}A lattice in $\mathbb{R}^n$ is a pair $(\Lambda,\rho)$ where $\Lambda$ is a finitely generated and free subgroup of the additive group $(\mathbb{R}^n,+)$ and $\rho:\Lambda \to \mathbb{Z}^n$ is an isomorphism. We denote by $\lambda_1(\Lambda)$ the minimal length among the set of non-zero elements of $\Lambda$.
\end{defn}
Notice that our definition has implicit the feature of being of full rank. There are more general definitions but this will be enough for us. 

\begin{example}In the ring of Gaussian integers $\mathbb{Z}[i]=\left\{a+bi; a,b\in\mathbb{Z}\right\}$, identifying $(\mathbb{C},+)$ with $(\mathbb{R}^2,+)$, we can impose a lattice structure in (at least) two ways:
\begin{equation}
\begin{array}{ccc}
\rho_1: \mathbb{Z}[i] & \to & \mathbb{Z}^2\\
a+bi & \mapsto & (a,b)
\end{array}
\end{equation}
or 
\begin{equation}
\begin{array}{ccc}
\rho_2:\mathbb{Z}[i] & \to & \mathbb{Z}^2\\
a+bi & \mapsto & (a+b,a-b).
\end{array}
\end{equation}
\end{example}

%

\begin{defn}Let $(\Lambda,\rho)$ be a lattice in $\mathbb{R}^n$ with basis $\mathcal{B}=\left\{v_1,...,v_n\right\}$. The fundamental parallelogram of $\Lambda$ associated to $\mathcal{B}$ is:
$$
\mathcal{F}(\mathcal{B})=\left\{\sum_{i=1}^n\lambda_iv_i:\mbox{ with }0\leq \lambda <1\right\}.
$$
\end{defn}


\begin{prob}[SVP] The shortest vector problem (SVP) is, on input of an arbitrary lattice $\Lambda$ in $\mathbb{R}^n$, together with a basis, to determine a vector $x\in\Lambda$ with length $\lambda_1(\Lambda)$. For $\gamma>0$, the $\gamma$-approximate shortest vector problem ($\gamma$-SVP) is to determine a non-zero vector $x\in\Lambda$ with $||x||\leq \gamma\lambda_1(\Lambda)$.
\end{prob}



\begin{prob}[CVP]The closest vector problem (CVP) is, on input of an arbitrary lattice $\Lambda$ in $\mathbb{R}^n$, together with a basis and a point $y\in\mathbb{R}^n$, to find $x_y\in\Lambda$ such that
$$
||y-x_y||=\min_{x\in\Lambda}||x-y||.
$$
\end{prob}


In \cite{micciancio}, it is proved that $\gamma$-SVP is NP-hard for $\gamma<\sqrt{2}$ and in \cite{boas}, it is proved that CVP is NP-complete, hence if $P\neq NP$, these two problems cannot be solved in polynomial time, even with the aid of a quantum computer.

\subsection{The Learning With Errors problem (LWE)}

Let $q$ be a rational prime for which a suitable choice will be made later. 

\begin{defn}The real torus of dimension $1$ is the quotient group $\mathbb{T}=\mathbb{R}/\mathbb{Z}$, its elements are equivalence classes of the form $x+\mathbb{Z}$ with $x\in [0,1)$.
\end{defn}

\begin{lem}The following map is a group monomorphism:
$$
\begin{array}{ccc}
\mathbb{F}_q & \hookrightarrow & \mathbb{T}\\
a+q\mathbb{Z} & \mapsto & \frac{a}{q}+\mathbb{Z}.
\end{array}
$$
\label{torus}
\end{lem}

A realization of lattice-based cryptography immune to all current quantum attacks and with a good chance of being NP-hard relies on the LWE problem, which we describe in this subsection.

\begin{defn}[LWE-oracles]

Let $\chi$ be a discrete random variable with values in $\mathbb{T}$. For $s\in \mathbb{F}_q^n$, chosen uniformly at random, a LWE-oracle with respect to $s$ and $\chi$ is a probabilistic algorithm $A_{s,\chi}$ which, at each execution, performs the following steps:
\begin{itemize}
\item[1. ] Samples a vector $a$ uniformly at random from $\mathbb{F}_q^n$.
\item[2. ] Computes the scalar product $\langle a,s \rangle$.
\item[3. ] Samples $e\in\mathbb{T}$ from $\chi$.
\item[4. ] Outputs the vector $\left[a, \frac{1}{q}\langle a,s\rangle+e\right]\in\mathbb{F}_q^n\times \mathbb{T}$.
\end{itemize}
\end{defn}

\begin{defn}[The LWE problem]

Let $\chi$ be a discrete random variable with values in $\mathbb{T}$ as before. The LWE problem for $\chi$ and $q$ is defined as follows:
\begin{itemize}
\item[a)] Search version: for an element $s\in \mathbb{F}_q^n$ chosen uniformly at random and a LWE-oracle $A_{s,\chi}$, if an adversary is given access to arbitrarily many samples of the LWE distribution, this adversary must recover $s$ with non-negligible advantage.
\item[a)] Decissional version: for an element $s\in \mathbb{F}_q^n$ chosen uniformly at random and a LWE oracle $A_{s,\chi}$, the adversary is asked to distinguish, with non-negligible advantage, between arbitrarily many samples from $A_{s,\chi}$ and the same number of samples $(a_i,b_i)\in\mathbb{F}_q^n\times\mathbb{T}$ where $a_i$ and $b_i$ are chosen independently and uniformly at random from $\mathbb{F}_q^n$ and $\mathbb{T}$.
\end{itemize}
\end{defn}

From now on, $\chi$ will be an $\mathbb{F}_q$-valued Gaussian variable, which can be thought of as having values on $\mathbb{T}$ via Lemma \ref{torus}. Such a variable is defined as follows:  For $\sigma, c\in\mathbb{R}$ we set $\rho_{\sigma,c}(x) = \exp{\frac{-(x-c)^2}{2 \sigma^2}}$. Write 
$$S_{\sigma,c} = \rho_{\sigma,c}(\mathbb{Z}) = \sum_{k=-\infty}^{\infty}\rho_{\sigma,c}(k),$$ 
and define $D_{\sigma,c}$ to be the distribution on $\mathbb{Z}$ such that the probability of $x\in\mathbb{Z}$ is $\rho_{\sigma,c}(x)/S_{\sigma,c}$. Finally, the discrete Gaussian distribution $\chi$ with values in $\mathbb{F}_q$, mean $0$, and parameter $\sigma$ is defined by the probability function
$$
Pr\left[\chi=k\right]=\sum_{n\equiv k\pmod{q}}pr\left[D_{\sigma,0}=n\right].
$$
Some words of caution: first, the variance of $\chi$ should be very close to $\sigma^2$, but not neccesarily must be equal: in lattice-based cryptography one speaks about discrete random variables of parameter (rather than variance) $\sigma^2$. Second, effective sampling from discrete Gaussian distributions is a difficult topic  and in practical cases it is approached only by numerical approximation (see \cite{statistics}). 

We conclude here with the following result due to Regev (\cite{regev}): a polynomial time quantum reduction from the SVP problem to the LWE problem, which backs the hardness of LWE and makes it a candidate to sustain a cryptosystem from it, as we will see in the next subsection.

\begin{thm}[Regev, \cite{regev}]Let $\chi_r$ be a discrete Gaussian of parameter $r^2$, $q$ a prime and $s\in\mathbb{F}_q^n$. Assume $r\geq 2\sqrt{n}$. Then, there is a quantum polynomial time reduction from $\gamma$-SVP, with $\gamma=\tilde{O}(nq/r)$to the search LWE problem attached to the LWE oracle $A_{s,\chi_r}$.
\label{regevth}
\end{thm}

\subsection{Attacks against LWE}
In the language of Machine Learning, due to Theorem \ref{regevth}, a training algorithm for the LWE problem can be turned, in polynomial time (on a quantum computer), into an algorithm (of the same complexity) which solves the SVP problem. If the $\gamma$-SVP problem were NP-hard for the value of $\gamma$ given in Theorem \ref{regevth}, it would follow the NP-hardness of the LWE problem. However, that value of $\gamma$ depends on $r^2$, the parameter of $\chi_r$, and the values of $r$ for which the LWE-problem for $A_{s,\chi_r}$ results in a correct cryptosystem is bigger than $\sqrt{2}$, the value for which SVP is NP-hard. Hence, Regev's reduction cannot be used to prove NP-hardness of SVP. Nevertheless, this kind of result can be seen as a clue towards its security. 

However, LWE has not been yet broken and there is a wide consense of the problem being \emph{intractable}. Nevertheless, some ad-hoc instantiations may be insecure against very simple attacks. Given $m$ LWE samples $\{(a_i,b_i=\frac{1}{q}\langle s,a_i\rangle+e_i)\}_{i=1}^m$, we can put them in columns to obtain a matrix $A=[a_1|...|a_m]\in\mathbb{F}_q^{n\times m}$ and set $\textbf{b}=\frac{1}{q}A^ts+\textbf{e}$, where $\textbf{e}$ is the column vector of errors. We analyze three vulnerable instantiations:
\begin{itemize}
\item[1. ] If $\chi$ is identically zero (errorless LWE), $s$ can be recovered via Gaussian elimination as long as the rows of A are linearly independent, which holds with high probability for $m>n$.
\item[2. ] If $\chi$ takes values in $z+[-1/2,1/2)$ with fixed $z\in\mathbb{R}$, we can round away each coordinate of $\textbf{b}$ and subtract $z$ to reduce to errorless LWE.
\item[3. ] If each group of $k$ samples has an error vector drawn from some distribution $\kappa$ in $\mathbb{R}^k$ and some discretized error coordinate is always $0$ under $\kappa$, we can ignore the samples corresponding to the other coordinates and since we have access to unlimited samples by hypothesis, we can equally reduce ourselves to errorless LWE. Analogously, we can reduce to errorless LWE if the sum (or a linear combination) of the $k$ error coordinates in each group is $0$.
\end{itemize}
\label{lwevulnerable}

\begin{rem}Generalizing Case 2 in the above analysis, the error distribution $\chi$ is said \emph{not to wrap around $\mathbb{Z}$} if $Pr_{e\leftarrow\chi}\left\{e\not\in z+[\frac{1}{2},\frac{1}{2}) \right\}$ is small enough for some known $z\in\mathbb{R}$. In this case, again by our unlimited access to the LWE oracle, the same attack as in Case 2 has good chance of success.
\end{rem}

Other instantiations of LWE can be attacked by more sophisticated means. For instance, as described in \cite{ag11}, if all the discretized errors in our samples (i.e. seen not in the torus but in $\mathbb{F}_q^n$, after rounding to the closest integer) lie in a known set of size $d$, then search LWE can be broken in approximately $n^d$ time and space, using $n^d$ samples. If $d=O(1)$, the attack is polynomial in the dimension, while if $d=n^{1-\varepsilon}$, the attack is sub-exponential. For details cf. \cite{peikert}, Section 2.

What these attacks should make us learn is that the distribution $\chi$ should be very carefully chosen, to avoid falling in a low dimensional subspace of $\mathbb{F}_q^n$, in which case, reduction to errorless LWE might have a good chance of success.

\subsection{The LWE cryptosystem}
Based on the hardness guarantee in Theorem \ref{regevth}, and avoiding the above problematic instantiations, the LWE problem can be used to build the following cryptosystem:

\begin{construction}[LWE cryptosystem, Regev (\cite{regev})]
\begin{itemize}
\item[ ]
\item[1. ] Parameters: $n,m\in\mathbb{N}$, $\alpha>0$.
\item[2. ] Private key: $s\in\mathbb{F}_q^n$ chosen uniformly at random.
\item[3. ] Public key: 

\begin{itemize}
	\item[3.1 ] Sample $a_1,...,a_m\in\mathbb{F}_q^n$, independently and uniformly at random.
	\item[3.2 ] Sample $e_1,...,e_m\in\mathbb{F}_q$, independently from $\chi$, which is assumed here to be a discrete Gaussian of zero mean and parameter $\frac{\alpha q}{2\pi}$.
	\item[3.3] Publish $\lbrace \left[ a_i,b_i=\langle a_i,s\rangle +e_i \right]\rbrace_{i=1}^m$.
\end{itemize}

\item[4. ] Encryption: for a bit $z\in\mathbb{F}_2$, consider it as an element of $\mathbb{F}_q$ by mapping the $0$ and $1$ of $\mathbb{F}_2$ to  the $0$ and $1$ of $\mathbb{F}_q$. Select a random subset $S\subseteq \{ 1,...,m\}$ and map
$$
z \mapsto \left[u,v\right] =\left[ \sum_{i\in S}a_i, z\lfloor \frac{q}{2} \rfloor + \sum_{i\in S}b_i\right].
$$
\item[5. ] Decryption: on receiving an encrypted message $\left[u,v\right]$, compute $d:=v-\langle u,s \rangle $. This equals $z\lfloor \frac{q}{2} \rfloor + \sum_{i\in S} e_i$. If $z=0$, then $d$ has absolute value below $\lfloor \frac{q}{4} \rfloor$ with probability as close to $1$ as desired, depending on how we choose the parameter $\alpha$. So, if this is the case,  decrypt to $0$,  otherwise, decrypt to $1$.
\end{itemize}
\end{construction}

The right choice of $q$, $m$ and $\alpha$ is given in the following result, whose proof is omited since it is very similar to the cryptographic scheme presented in the next subsection, whose proof we will discuss.

\begin{thm}If $q\in \{n^2,...,2n^2\}$, $\alpha=\frac{1}{\sqrt{n}\log^2(n)}$ and $m$ is of the order of $n \log(q)$, then the LWE cryptosystem is correct and pseudorandom\footnote{I.e. statistically indistinguishable from a uniform distribution.}.
\end{thm}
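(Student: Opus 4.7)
The plan is to prove correctness and pseudorandomness separately, and to see that the parameter regime stated makes the noise tolerable while still allowing a statistical leftover-hash style argument for hiding the plaintext.

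For correctness, after subtracting $\langle u,s\rangle$ from $v$ the receiver obtains
$$
d = z\lfloor q/2\rfloor + \sum_{i\in S} e_i,
$$
so it is enough to show that $|\sum_{i\in S} e_i| < \lfloor q/4\rfloor$ with overwhelming probability, where the absolute value is understood by lifting to the symmetric representatives in $\{-q/2,\dots,q/2\}$. Each $e_i$ is drawn from the discrete Gaussian over $\mathbb{Z}_q$ with parameter $\sigma=\alpha q/(2\pi)$. A standard tail bound for discrete Gaussians (see the reference \cite{statistics} cited just above the construction) says that a sum of $m$ independent samples behaves like a discrete Gaussian of parameter at most $\sqrt{m}\,\sigma$, with sub-Gaussian tails. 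Plugging in $\sigma \leq \alpha q/(2\pi)$, $\alpha = 1/(\sqrt n \log^2 n)$, $m = O(n\log q)$ and $q\leq 2n^2$ (so $\log q = O(\log n)$), the parameter of the sum is of order $q/\log^{3/2} n$, which is $o(q)$. Hence the probability that the aggregate noise exceeds $q/4$ is negligible in $n$, and the rounding step recovers $z$.

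For pseudorandomness we go through a standard hybrid argument in two steps. First, by the decisional LWE assumption with Gaussian error (which, via Regev's quantum reduction from SVP invoked just before the construction, is backed by the worst-case hardness of lattice problems), the public key $\{(a_i,\langle a_i,s\rangle + e_i)\}_{i=1}^m$ is computationally indistinguishable from $m$ independent uniform samples $\{(a_i,b_i)\}$ in $\mathbb{F}_q^n\times \mathbb{F}_q$. Replacing the real public key by a uniform one, the ciphertext becomes
$$
\Bigl(\sum_{i\in S} a_i,\; z\lfloor q/2\rfloor + \sum_{i\in S} b_i\Bigr)
$$
with $(a_i,b_i)$ uniform and $S\subseteq\{1,\dots,m\}$ a uniformly random subset. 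Second, we invoke the leftover hash lemma: the map $S\mapsto \sum_{i\in S}(a_i,b_i)$ is a universal hash family from $\{0,1\}^m$ to $\mathbb{F}_q^{n+1}$, and provided $m$ exceeds $(n+1)\log q$ by a small margin (the factor $1.1$ in the statement is exactly what gives the needed min-entropy slack $m \geq (n+1)\log q + 2\log(1/\varepsilon)$), the output is $\varepsilon$-close in statistical distance to uniform on $\mathbb{F}_q^{n+1}$. Shifting the second coordinate by $z\lfloor q/2\rfloor$ preserves uniformity, so the ciphertext is statistically close to a uniform pair, independent of $z$.

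The main obstacle is calibrating the three parameters $q,\alpha,m$ simultaneously: one needs $\alpha q$ large enough that the Gaussian error activates the worst-case to average-case reduction of Regev (this forces $\alpha q \gtrsim \sqrt n$, which is consistent with $\alpha=1/(\sqrt n\log^2 n)$ and $q\sim n^2$), small enough that $\sqrt m\cdot \alpha q \ll q$ for correctness, and one simultaneously needs $m\geq (1+\delta)(n+1)\log q$ for the leftover hash step. The stated choice $q\in[n^2,2n^2]$, $\alpha = 1/(\sqrt n\log^2 n)$ and $m\approx 1.1\, n\log q$ sits precisely in the intersection of these three constraints, and once this is checked, correctness and pseudorandomness follow from the two ingredients above. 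A fully formal write-up would be essentially identical to the argument for the ring-based scheme proved in the next subsection, which is why the author defers the details there.
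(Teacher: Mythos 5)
Your proof is correct, but it is worth noting that the paper does not actually prove this theorem: it explicitly omits the proof, deferring to the argument given later for the PLWE cryptosystem (Proposition 5.6). Your correctness argument matches that template --- bound the aggregate noise $\sum_{i\in S}e_i$ by a sub-Gaussian tail estimate so that it stays below $q/4$, exactly as the PLWE proof bounds $er+e_2-e_1s$. Where you genuinely diverge is in the pseudorandomness step. The paper's PLWE argument makes the ciphertext pseudorandom by observing that the encryption randomness $r$ is itself sampled from the error distribution, so $(a,u)$ and $(b,v)$ are fresh RLWE samples and one appeals to the decisional assumption a second time (a purely computational argument). In Regev's LWE construction as stated here, the encryption randomness is instead a uniformly random subset $S\subseteq\{1,\dots,m\}$, so that mechanism does not transfer; the correct tool is the one you use, namely the leftover hash lemma applied to the pairwise-independent family $S\mapsto\sum_{i\in S}(a_i,b_i)$, which makes the second hybrid step \emph{statistical} rather than computational and is precisely where the constant $1.1$ in $m\approx 1.1\,n\log q$ is consumed (one needs $m\geq(n+1)\log q+2\log(1/\varepsilon)$). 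In that sense your write-up is more faithful to the construction than the paper's cross-reference suggests: the two proofs share the correctness half but rest on different key lemmas for hiding the plaintext, and yours also explains why all three parameter constraints ($\alpha q$ large enough for the worst-case reduction, $\sqrt{m}\,\alpha q\ll q$ for decryption, $m\gtrsim(n+1)\log q$ for the hashing step) are simultaneously satisfiable. The only minor caution is that a sum of independent discrete Gaussians is not exactly a discrete Gaussian of parameter $\sqrt{m}\,\sigma$; the sub-Gaussian tail bound you invoke from \cite{statistics} is the right way to make that step rigorous.
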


As we can see, a public key for LWE has $m$ vectors in $\mathbb{F}_q^n$, since $m$ is of the order of $n\log(n)$, it turns out that a public key has an $\mathbb{F}_q$-size of the order $n^2\log(n)$. This quadratic overhead is an unfeasible constrain from a practical point of view, in particular in settings such as hand-held digital broadcasting, mobile encryption and small devices in tentative applications of the IoT (Internet of Things), where the hardware has a relatively small memory. Moreover, in other recent scenarios where homomorphic encryption is desirable, LWE cannot fit well if the plaintext space is big enough. Such a scenario is that of electronic elections (e-voting and i-voting), which has to combine encryption with signature and authentication. For a large enough country, the size of the keys (which even if a pseudorandom generator is used, must grow with the size of the plaintext space) is certainly to be taken into account. 

A variation of the LWE problem, the ring learning with errors (RLWE) problem was introduced to tackle this quadratic overhead in the key sizes. The foundations of the problem require several notions from algebraic number theory, which we present next.

\section{Some basics of algebraic number theory}
Here we present the notions of algebraic number theory used to build the RLWE cryptosystem. Readers who are familiar with them can safely skip this section, since all our notations are standard. Readers who are not so familiar are referred to \cite{stewart}, Chapter 2 for more details.

\subsection{Algebraic number fields} An algebraic number field (number field, for short) is a field extension $K = \mathbb{Q}(\theta)/\mathbb{Q}$ of finite degree $n$, where $\theta$ satisfies a relation $f(\theta) = 0$ for some irreducible polynomial $f(x) \in\mathbb{Q}[x]$, which is monic without loss of generality. The polynomial $f$ is called the minimal polynomial of $\theta$, and $n$ is also the degree of $f$. Notice that $K$ is in particular an $n$-dimensional $\mathbb{Q}$-vector space and the set $\{1,\theta,...,\theta^{n-1}\}$ is a $\mathbb{Q}$-basis of $K$ called a power basis. Notice that associating $\theta$ with the unknown $x$ yields a natural isomorphism between $K$ and $\mathbb{Q}[x]/f(x)$.

Let $\overline{\mathbb{Q}}$ denote an algebraic closure of $\mathbb{Q}$ fixed from now on. A number field $K = \mathbb{Q}(\theta)$ of degree $n$ has exactly $n$ field embeddings (injective field homomorphisms) $\sigma_i: K \to \overline{\mathbb{Q}}$ fixing $\mathbb{Q}$. Each embedding $\sigma_i$ is determined by $\sigma_i(\theta)=\theta_i$, where $\{\theta_i\}_{i=1}^n$ are the different roots of $f$. The number field is said to be Galois if $K$ is the splitting field of $f$.

\begin{example}Denote by $\sqrt[3]{2}$ the unique real cubic root of $2$. The number field $K=\mathbb{Q}(\sqrt[3]{2})$ is not Galois: indeed, the other two roots of the minimal polynomial, $X^3-2$ do not belong to $K$. To make it Galois, we need to adjoin $\omega_3$, a non-real cubic root of $1$.
\end{example}

An embedding whose image lies in $\mathbb{R}$ (corresponding to a real root of $f$) is called a real embedding; otherwise it is called a complex embedding. Since complex roots of $f$ come in conjugate pairs, so do the complex embeddings. The number of real embeddings is denoted $s_1$ and the number of pairs of complex embeddings is denoted $s_2$, so we have $n=s_1+2s_2$. If $s_2=0$ (resp. $s_1=0$) $K$ is said to be totally real (resp. totally imaginary).

\begin{defn} The canonical embedding $\sigma: K\to \mathbb{R}^{s_1}\times\mathbb{C}^{2s_2}$ is then defined as 
$$
\sigma(x) = (\sigma_1(x),...,\sigma_n(x)).
$$ 
\end{defn}

\subsection{Algebraic integers} An algebraic integer is an element of $\overline{\mathbb{Q}}$ whose minimal polynomial over $\mathbb{Q}$ has integer coefficients. For a number field $K$ of degree $n$, let $\mathcal{O}_K\subset K$ denote the set of all algebraic integers in $K$. This set forms a ring under addition and multiplication in $K$ (\cite{stewart}, Theorem 2.9), called the ring of integers of $K$. It happens that $\mathcal{O}_K$ is a free $\mathbb{Z}$-module of rank $n$, i.e., it is the set of all $\mathbb{Z}$-linear combinations of some basis $\mathcal{B}=\{b_1,...,b_n\}\subset \mathcal{O}_K$ of $K$ (\cite{stewart}, Theorem 2.16). Such a set $\mathcal{B}$ is called an integral basis. 


\begin{example}Let $n>1$ be an integer. The set of primitive $n$-th roots of unity (those of the form $\theta_k=exp(2\pi ik/n)$, with $1\leq k \leq n$ coprime to $n$) forms a multiplicative group of order $m=\phi(n)$. The $n$-th cyclotomic polynomial is
$$
\Phi_n(x)=\prod_{k\in\mathbb{Z}^*_n}(x-\theta_k).
$$
This is the minimal polynomial of $\theta_k$ for each $k$, so that $K=\mathbb{Q}(\theta_k)$ is a number field of degree $m$. It can be proved (\cite{stewart} Chap 3) that the ring of integers of $K$ is precisely $\mathbb{Z}[\theta]$ for each $\theta=\theta_k$, with $k\in\mathbb{Z}^*_n$.
\label{cycloexample}
\end{example}

\begin{defn}A number field $K$ such that $\mathcal{O}_K=\mathbb{Z}[\alpha]$ for some $\alpha\in \mathcal{O}_K$ is said to be monogenic.
\end{defn}

\begin{example}Let $d$ be a square-free integer. Consider the number field $\mathbb{Q}(\sqrt{d})$. It can be shown that the ring of integers of $K$ is $\mathbb{Z}[\sqrt{d}]$ if $d\not\equiv 1\pmod{4}$ and $\mathbb{Z}[\frac{1+\sqrt{d}}{2}]$ otherwise.
\label{quadexample}
\end{example}

\begin{defn}[Norm, trace and discriminant]For a number field $K$ of degree $n$, given an element $\alpha\in K$, its norm is defined as the product
\begin{equation}
N(\alpha)=\sigma_1(\alpha)\cdots\sigma_n(\alpha),
\end{equation}
and the trace is
\begin{equation}
Tr(\alpha)=\sigma_1(\alpha)+...+\sigma_n(\alpha).
\end{equation}

The discriminant of $K$, denoted $\Delta_K$ is the square of the determinant of the following matrix:
$$
\left(
\begin{array}{ccc}
\sigma_1(\theta_1) & ... & \sigma_n(\theta_1)\\
\vdots & \ddots & \vdots\\
\sigma_1(\theta_n) & ... & \sigma_n(\theta_n)\\
\end{array}
\right),
$$
where $\{\theta_1,...\theta_n\}$ is an integral basis of $\mathcal{O}_K$. Notice that since lattice base-change matrices are unimodular, the definition does not depend on the choice of the basis\footnote{In most algebraic number theory texts our $\Delta_k$ is called the minimal discriminant, since it is possible to define such a determinant for each $K$-basis (not necessarily integral). We will only consider integral bases and minimal discriminants.}.
\end{defn}

\begin{example}[\cite{was} Prop. 2.7]Let $K_n$ denote the $n$-th cyclotomic field. Then, the discriminant of $K$ equals
$$
\Delta_{K_n}=(-1)^{\phi(n)/2}\frac{n^{\phi(n)}}{\prod_{p\mid n}p^{\frac{\phi(n)}{p-1}}}.
$$
\label{cyclodisc}
\end{example}

Norm and trace and discriminant are rational numbers Moreover, they are integers when restricted to $\mathcal{O}_K$.

\subsection{Ideals and ideal lattices} Recall that an ideal of a ring $R$ is an additive subgroup $I\subseteq R$ such that for each $\alpha\in R$ and each $\beta\in I$, it is $\alpha\beta\in I$. For instance, for $d\equiv 1\pmod{4}$, the subring $\mathbb{Z}[\sqrt{d}]$ is not an ideal of the ring of integers, just a subring with finite index.

Unlike $\mathbb{Z}$, in the ring of integers $\mathcal{O}_K$ of a number field $K$, it is not true that every element $\alpha\in\mathcal{O}_K$ is a unique product, up to order and units, of different irreducible elements\footnote{An element $\alpha$ of a ring $R$ is irreducible if for any $\beta,\gamma\in R$ such that $\alpha=\beta\gamma$, either $\beta$ or $\gamma$ is a unit.}. For example, in $\mathbb{Z}[\sqrt{-6}]$, we have $6=2\cdot 3=\sqrt{-6}\cdot\sqrt{-6}$, where $2,3$ and $\sqrt{-6}$ are irreducible elements. However, this generalisation holds if we replace \emph{(irreducible) elements} by \emph{(prime) ideals}:

\begin{thm}[\cite{stewart}, Theorem 5.6]$\mathcal{O}_K$ is a Dedekind domain. In particular, for each ideal $I\subseteq\mathcal{O}_K$, there exist unique prime ideals $\frak{p}_1,...,\frak{p}_r$ and unique integers $e_1,...,e_r\in\mathbb{Z}_{\geq 0}$ such that
$$
I=\frak{p}_1^{e_1}...\frak{p}_r^{eg_r}.
$$
Moreover, denoting $f_i=|\mathcal{O}_K/\frak{p}_i|$, for $i=1,...,r$, it is
$$
n=e_1f_1+...+e_rf_r.
$$
\end{thm}
\begin{example}In $\mathbb{Z}[\sqrt{-17}]$, we can express the principal ideal $\langle 18 \rangle$ as the product $\frak{p}_1^2\frak{p}_2^2\frak{p}_3^2$, with $\frak{p}_1=\langle 2, 1+\sqrt{-17}\rangle$, $\frak{p}_2=\langle 3, 1+\sqrt{-17}\rangle$ and $\frak{p}_3=\langle 3, 1-\sqrt{-17}\rangle$.
\end{example}
\begin{defn}Let $p\in\mathbb{Z}$ be a rational prime decomposed as $(p)=\frak{p}_1^{e_1}...\frak{p}_r^{e_r}$ in $\mathcal{O}_K$ with $\frak{p}_i$ prime ideals. The number $e_i$ is called the ramification index of $p$ at $\frak{p}_i$ and if $e_i>1$, then $p$ is said to ramify at $\frak{p}_i$. The number $f_i=|\mathcal{O}_K/\frak{p}_i|$ is called inertia degree of $p$ at $\frak{p}_i$. If $r=n$, then all the $e_i$ and $f_i$ equal $1$ and $p$ is said to be totally split.
\end{defn}

A theorem by Minkowski states that every number field has only finitely many ramifying primes, which are precisely the rational primes dividing the discriminant. Hence, going back to Example \ref{cyclodisc}, we see that for the $n-th$ cyclotomic field the ramifying primes are those which divide $n$.

\begin{defn}Let $R$ be a discrete ring (free and finitely generated as abelian group) and $\sigma:R\to\mathbb{R}^n$ an additive monomorphism. Nottice that $\sigma(R)$ is a lattice. The family of ideal lattices (for the ring $R$ and embedding $\sigma$) is the set of all lattices $\sigma(I)$ for ideals $I$ in $R$. 
\end{defn}
For instance, for $R=\mathbb{Z}[x]/f(x)$, the coefficient embedding maps any element of R to the integer vector in $\mathbb{Z}^n$ whose coordinates are exactly the coefficients of that element when viewed as a polynomial residue. When $R=\mathcal{O}_K$, the canonical embedding $\sigma$ provides in a natural way an ideal lattice for each ideal $I$ of $R$. 

Notice that for the canonical embedding, multiplication and addition are preserved componentwise. On the contrary, for instance, for the ring $R=\mathbb{Z}[x]/(x^n+1)$, the componentwise multiplication in $\mathbb{Z}_q^n$ doesn't correspond with multiplication in $R$: multiplying by $x$, is equivalent to shifting the coordinates and negate the independent term. This is one of the advantages of using the canonical embedding.

Moreover, one has the following connection between the fundamental parallelotope of $\sigma(\mathcal{O}_K)$ and the discriminant $\Delta_K$:

\begin{thm}[\cite{stewart}, cf. Theorem 8.1] Assume that the number field $K$ has $s$ pairs of complex embeddings. Then, the Euclidean measure of the fundamental parallelotope of $\sigma(\mathcal{O}_K)$ equals $2^s\sqrt{\Delta_K}$.
\end{thm}

\section{Ring learning with errors: problems, cryptosystem and key exchange}
To define the ring learning with errors problem (RLWE), let $K$ be a number field of degree $n$ and ring of integers $\mathcal{O}_K$, regarded as a lattice in $\mathbb{R}^n$, by means of the canonical embedding. Closely connected with RLWE is the polynomial learning with errors problem (PLWE). Next we formally introduce both problems and explore their relation. 
\subsection{Statement of the problems}
In the rest of this subsection $f(x)\in\mathbb{Z}[x]$ is supposed to be a monic irreducible polynomial of degree $n$ and  $q$ is a rational prime which we will choose later. Define, further, $\mathcal{O}:=\mathbb{Z}[x]/(f(x))$, which can also be regarded as a lattice in $\mathbb{R}^n$ by means of the coordinate embedding

\begin{equation}
\begin{array}{ccc}
\sigma: \mathcal{O} & \to & \mathbb{R}^n\\
\displaystyle\sum_{i=0}^{n-1}a_i\overline{x}^i & \mapsto & (a_0,...,a_{n-1}).
\end{array}
\end{equation}
Each root $\alpha$ of $f$ defines a number field $K_{\alpha}=\mathbb{Q}(\alpha)$. Moreover, the ring $\mathbb{Z}[\alpha]$ is a finite index suborder of the ring of integers $\mathcal{O}_{K_{\alpha}}$. The restriction of the canonical embedding to $\mathbb{Z}[\alpha]$ also provides a lattice in $\mathbb{R}^n$. A very common choice is $f(x)=x^{2^k}+1$, the $2^{k+1}$-th cyclotomic polynomial (cf. \cite{stehle2}).

The $n$-dimensional torus attached to $\mathcal{O}_K$ is $\mathbb{T}:=(K\otimes_{\mathbb{Q}}\mathbb{R})/\mathcal{O}_K$, and the $f$-torus is defined to be $\mathbb{T}_f:=\mathbb{R}_q[X]/(f(X))$, with $\mathbb{R}_q:=\mathbb{R}/\mathbb{Z}$. As in Lemma \ref{torus}, there are embeddings $\mathcal{O}_K/q\mathcal{O}_K\hookrightarrow\mathbb{T}$ and $\mathcal{O}/q\mathcal{O}\hookrightarrow\mathbb{T}_f$.

\begin{defn}[RLWE and PLWE-oracles]

\begin{itemize}
\item[]
\item[1.] Let $\chi$ be a discrete random variable with values in $\mathcal{O}_K/q\mathcal{O}_K$ (which we regard as taking values in $\mathbb{T}$). For $s\in \mathcal{O}_K/q\mathcal{O}_K$ chosen uniformly at random, a RLWE-oracle with respect to $s$ and $\chi$ is a probabilistic algorithm $A_{s,\chi}$ which, at each execution performs the following steps:
\begin{itemize}
\item[1. ] Samples an element $a\in \mathcal{O}_K/q\mathcal{O}_K$ uniformly at random,
\item[2. ] Samples an element $e$ from $\chi$,
\item[3. ] Outputs the pair $(a,as+e)\in \mathcal{O}_K/q\mathcal{O}_K\times\mathbb{T}$.
\end{itemize}
 
\item[2.] Let $f(x)\in\mathbb{Z}[x]$  be monic irreducible as above and $\chi$ a discrete random variable  with values in $\mathcal{O}/q\mathcal{O}$ (which we regard as taking values in $\mathbb{T}_f$) . For $s\in \mathcal{O}/q\mathcal{O}$ chosen uniformly at random, a PLWE-oracle with respect to $s$ and $\chi$ is a probabilistic algorithm $A_{s,\chi}$ which, at each execution performs:
\begin{itemize}
\item[1. ] Samples an element $a\in \mathcal{O}/q\mathcal{O}$ uniformly at random,
\item[2. ] Samples an element $e$ from $\chi$,
\item[3. ] Outputs the pair $(a,as+e)\in \mathcal{O}/q\mathcal{O}\times\mathbb{T}_f$.
\end{itemize}

\end{itemize}
\end{defn}

\begin{defn}[The RLWE/PLWE problem]

Let $\chi$ be a discrete random variable with values in $\mathcal{O}_K/q\mathcal{O}_K$ (in $\mathcal{O}/q\mathcal{O}$). The RLWE (PLWE) problem for $\chi$ is defined as follows:
\begin{itemize}
\item[a)] Search version: for an element $s\in \mathcal{O}_K/q\mathcal{O}_K$ ($\mathcal{O}/q\mathcal{O}$) chosen uniformly at random and a RLWE (PLWE)-oracle $A_{s,\chi}$, if an adversary is given access to arbitrarily many samples $(a_i,a_is+e_i)$ of the RLWE (PLWE) distribution, this adversary must recover $s$ with non-negligible advantage.
\item[a)] Decisional version: for an element $s\in \mathcal{O}_K/q\mathcal{O}_K$ ($\mathcal{O}/q\mathcal{O}$) chosen uniformly at random and a RLWE (PLWE)-oracle $A_{s,\chi}$, the adversary is asked to distinguish, with non-negligible advantage, between arbitrarily many samples from $A_{s,\chi}$ and the same number of samples $(a_i,b_i)$, taken uniformly at randon from $\mathcal{O}_K/q\mathcal{O}_K\times\mathbb{T}$ ($\mathcal{O}/q\mathcal{O}\times\mathbb{T}_f$).
\end{itemize}
\end{defn}

Some words on the class of distributions we will use from now: first, notice that if $q$ is totally split, what we will frequently assume, a RLWE-sample can be seen as an $n$-tuple of coordinates with values in $\mathbb{F}_q$. However, such a RLWE-sample is indeed \emph{much more} than $n$ LWE-samples: $\mathcal{O}_K/q\mathcal{O}_K$ is not only an $\mathbb{F}_q$- vector space; it also has a ring structure. The flexibility and power of RLWE comes from exploiting the ring structure instead of the sheer lattice structure. This is the reason why instead of taking $n$-independent discrete one-dimensional Gaussians, we rather use an $n$-dimensional one. 

As in the $1$-dimensional case, the mean will also be supposed $0$, but in the RLWE scenario, the variance-covariance matrix (or rather, the multidmensional parameter) is normallly chosen, depending on the application, a) either to be diagonal, which is referred to as saying that the distribution is elliptic\footnote{This is useful when carrying out security-reduction proofs.}, or b) to have  the diagonal elements bounded in absolute value by $\alpha n^{1/4}$, for $\alpha$ a parameter which will be made explicit in the next theorem, which backs the security of the decisional RLWE-problem (hence of the search RLWE-problem) in the security of the SVP over ideal lattices.


Hence, from now on, we assume that $\chi_{\alpha}$ is an elliptic $n$-dimensional discrete $\mathbb{T}$-valued Gaussian of $0$-mean and the elements of the diagonal are bounded as explained. The details are delicate and can be omited in a first study, since the aforementioned bound is what really matters for most proofs, but the reader is referred to \cite{LPR}, p. 19 for more information.

\begin{thm}[\cite{LPR}, page 19]Let $K$ be the $m$-th cyclotomic number field of degree $n=\phi(m)$ and $R=\mathcal{O}_K$ its ring of integers. Let $\alpha<\sqrt{\log n/n}$ and let $q=q(n)\geq 2$, $q\equiv 1\pmod{m}$ be a prime bounded by a polynomial in $n$ such that $\alpha q\geq \omega(\sqrt{\log n})$\footnote{A function $f: \mathbb{N}\to\mathbb{R}$ is $\omega(g)$, for $g:\mathbb{ N}\to\mathbb{R}$ (denoted as $f=\omega(g)$) if for each integer $k>0$ there exists an integer $n_0>0$ such that  for each $n\geq n_0$, it is $|f(n)|\geq k |g(n)|$. The notation $f(n)\geq\omega(g(n))$ means that the asymptotic behaviour of $f$  is at least as fast as $\omega(g(n))$ .}. There is a polynomial time quantum reduction from $\tilde{O}(\sqrt{n}/\alpha)$-SVP on ideal lattices of $K$ to the decisional RLWE problem for $K$ and $\chi_{\alpha}$.
\label{thm1}
\end{thm}

The proof consists of two parts: the first is a quantum reduction from worst case approximate SVP on ideal lattices to the search version of RLWE. The reduction works in general, not for just cyclotomic number fields. It uses the iterative quantum reduction for general lattices in \cite{regev} as a black box, the main effort being the classical (non-quantum) part, which requires a careful handling of the canonical embedding and a smart use of the Chinese Remainder Theorem.

The second part shows that the RLWE distribution is pseudorandom via a classical reduction from the search version, which has been shown at least as hard as SVP for ideal lattices in the first part. It uses the fact that the cyclotomic field is Galois and the fact that $q\equiv 1\pmod {n}$, namely, that the ideal $ qR$ splits totally into $n$ different prime ideals in $R$.

In \cite{PRS} Theorem 6.2, the authors build on the same number-theoretical kind of arguments as in \cite{LPR} to prove an analogue of Theorem \ref{thm1} for non-cyclotomic Galois number fields.

\subsection{Equivalence between formulations}

In \cite{LPR}, the RLWE problem is introduced via $\mathcal{O}^{\vee}_K/q\mathcal{O}^{\vee}_K$  as sample space, instead of $\mathcal{O}_K/q\mathcal{O}_K$, where $\mathcal{O}^{\vee}_K$ means the dual of $\mathcal{O}_K$ with respect to the trace map, namely:
$$
\mathcal{O}^{\vee}_K=\{\alpha\in K: Tr(\alpha)\in\mathbb{Z}\}.
$$
We have avoided this formulation to spare the definition of the different ideal and, no less important, for the sake of the extension of our presentation. In any case, both formulations are equivalent (\cite{RSW} Theorem 2.13). By equivalence we mean that every solution for primal-RLWE can be turned in polynomial time into a solution for dual-RLWE (and viceversa, but this is immediate, since $\mathcal{O}_K\subseteq\mathcal{O}^{\vee}_K$), incurring in a noise increase which is polynomial in the number field degree.

Before speaking about the RLWE/PLWE equivalence we need to introduce a key concept: the condition number, which measures the distortion between the lattices given by the canonical embedding and the coordinate embeding. Let's do that.

For a monic irreducible polynomial of degree $n$, $f(x)\in \mathbb{Z}[x]$ and $\theta$ a root of $f(x)$, consider again the subring $\mathbb{Z}[x]/(f(x))\cong\mathbb{Z}[\theta]\subseteq\mathcal{O}_K$. As lattices, $\mathbb{Z}[x]/(f(x))$ is endowed with the coordinate embedding while $\mathbb{Z}[\theta]$ is endowed with the canonical embedding inherited from $\mathcal{O}_K$, and the evaluation-at-$\theta$ morphism causes a distortion between both. Explicitly, the transformation between the embeddings is given by
\begin{equation}
\begin{array}{ccc}
V_f: \mathbb{Z}[x]/(f(x)) & \to & \sigma_1(\mathcal{O}_{K})\times\cdots\times\sigma_n(\mathcal{O}_{K})\\
\displaystyle\sum_{i=0}^{n-1}a_i\overline{x}^i & \mapsto & \left(\begin{array}{cccc}1 & \theta_1 & \cdots & \theta_1^{n-1}\\ 
1 & \theta_2 & \cdots & \theta_2^{n-1}\\  
\vdots & \vdots & \ddots \vdots\\ 
1 & \theta_n & \cdots & \theta_n^{n-1}\end{array}\right)\left(\begin{array}{c}a_0 \\ a_1 \\  \vdots \\a_{n-1}\end{array}\right),
\end{array}
\label{latticebij}
\end{equation}
where $\theta=\theta_1,\theta_2,...,\theta_n$ are the Galois conjugates of $\theta$. As we see, the transformation $V_f$ is given by a Vandermonde matrix.

For any matrix $A=(a_{ij})\in M_{n\times n}(\mathbb{C})$, denote its transposed conjugate by $A^*$. The Frobenius norm is defined as 
\begin{equation}
||A||:=\sqrt{Tr(AA^*)}=\sqrt{\sum_{i,j=1}^n|a_{ij}|^2}.
\label{easy}
\end{equation}
The noise provoked by $V_f$ will remain \emph{controlled} whenever $||V_f||$ and $||V_f^{-1}||$ remain so, and the product $||V_f||||V_f^{-1}||$ serves as a reasonable measure of this control (cf. \cite{RSW} Ch. 4).
\begin{defn}The condition number of an invertible matrix $A\in\mathrm{M}_n(\mathbb{C})$ is defined as Cond$(A):=||A|||A^{-1}||$.
\end{defn}
Thus, in the monogenic case, the problem of the equivalence is the problem of showing that $Cond(V_f)=O(n^r)$ for some $r$ independent of $n$. The non-monogenic case needs an intermediate reduction that we will not address here.

In the above mentioned paper \cite{RSW}, the authors introduce the framework to study the RLWE/PLWE-equivalence in general and prove it for the following family of polynomials:

\begin{thm}[\cite{RSW}, pag. 4 and Theorem 4.7] There is a polynomial time reduction algorithm from RLWE over $K_{f_{n,p}}$ to PLWE for $f_{n,p}(x)$ where $K_{f_{n,p}}$ is the splitting field of $f_{n,p}(x)=x^n+xp(x)-r$ where $n\geq 1$, $p(x)$ runs over polynomials with $deg(p(x))<n/2$ and $r$ runs over primes such that $25||p||_1^2\leq r\leq s(n)$, with $s(x)$ a polynomial. Notice that there is a trivial reduction from PLWE to RLWE.\footnote{For $p(x)=\displaystyle\sum_{i=0}^np_ix^i\in\mathbb{R}[x]$, the $1$-norm is defined as $||p||_1=\displaystyle\sum_{i=0}^n|p_i|$}
\label{rswfam}
\end{thm}
The argument to prove this theorem is, first, to consider the family of polynomials $\phi_{n,a}(x):=x^n-a$, with $a\in\mathbb{Z}\setminus\{0\}$ square-free. Denoting by $K_{\phi_{n,a}}$ the splitting field of $\phi_{n,a}(x):=x^n-a$, the authors check in first place the equivalence for $K_{\phi_{n,a}}$ and they show, via a careful use of Rouch\'e theorem, that when $\phi_{n,a}(x)$ is perturbed by adding another polynomial with degree smaller than $n/2$ the roots of both polynomials are close enough.  

A reason to be interested in such an equivalence is that working with polynomial rings instead of rings of integers of number fields is more amenable for computer implementations. In \cite{bernstein}, it is shown how the arithmetic of several polynomial rings leads to very efficient cryptographic designs.

\subsection{The cyclotomic case}

In practice, the number fields we are the most interested in  cryptography are the cyclotomic number fields: they are very well understood and enjoy very nice arithmetic guatantees, like monogeneicity, which allows an amenable and efficient use for implementations. However, until recently, very little was known regarding the equivalence, apart from the power-of-two case:  the ideas in \cite{DD} can be applied to show the equivalence for cyclotomic number fields of degree $2^kp$ or $2^kpq$ with $p,q$ primes and $q<p$.  Besides that, the family in Theorem \ref{rswfam} is somehow artificially constructed, but, some of the ideas have been used by this author to give a partial proof of the equivalence in the cyclotomic case (\cite{blanco}). This proof is, to our knowledge, the first given for general cyclotomic degree (but with the caveat of imposing a condition which we comment next).

Before that, let us examine first the power-of-two degree. 

\begin{thm}Let $n=2^k$ and $m=\phi(n)=2^{k-1}$. Then, the map $V_{\Phi_n}$ is a scaled isommetry. In addition, $Cond(V_{\Phi_n})=m$.
\end{thm}
\begin{proof}To see that $V_{\Phi_n}$ is a scaled isometry, observe that when we multiply $V_{\Phi_n}$ by its conjugate transposed, the elements over the diagonal in the product matrix are identically $m$, and outside the diagonal, the element in position $(i,j)$ in the product matrix equals
$$
\sum_{k=0}^{m-1}\zeta_i^k\overline{\zeta_j}^k=\frac{1-\zeta_i^m\overline{\zeta_j}^m}{1-\zeta_i\overline{\zeta_j}}.
$$
But since $\zeta_i$ are $n$-primitive roots (and so are $\overline{\zeta_i}$), then $\zeta_i^m=-1$ and the sum vanishes. Hence, we have that
$$
V_{\Phi_n}V_{\Phi_n}^*=mId,
$$
and $m^{-1/2}V_{\Phi_n}$ is an isometry. For the condition number, we write $V_{\Phi_n}^{-1}=m^{-1}V_{\Phi_n}^*$, hence $||V_{\Phi_n}^{-1}||=1$. By Lemma \ref{easy}, the result follows.
\end{proof}

The main result in \cite{blanco} is a polynomial bound on the condition number for cyclotomic number fields which only depends on a) the number of different primes dividing the conductor and b) the degree of the number field, and what is more important, the dependence on the degree is polynomial once the number of different prime divisors has been fixed. Let us see how.

For $n\geq 1$, denote by $rad(n)$ the product of all the different primes dividing $n$ (without exponents). For the $n$-th cyclotomic polynomial $\Phi_n(x)$, denote by $A(n)$ the maximum of all the coefficients in absolute value. For instance, for $n=p^r$, prime, $A(n)=1$, and for $n=pq$, with $p,q$ prime, all the coefficients are $0,\pm 1$, due to a classical result by Migotti, hence $A(n)=1$. Our result is as follows:

\begin{thm}[\cite{blanco} Thm. 3.10] Let $n\geq 1$ and $m=\phi(n)$. If $rad(n)=p_1...p_k$, then:
$$
Cond(V_{\Phi_n})\leq 2rad(n)n^{2^k+k+2}A(n).
$$
\label{main2}
\end{thm}
\begin{proof}First, from the very definition, one has $||V_{\Phi_n}||=m$. Second, we use the following identity, a proof of which can be found, for instance, in \cite{was} Ch. 1:
\begin{equation}
\Phi_n(x)=\Phi_{rad(n)}(x^{\frac{n}{rad(n)}}),
\end{equation}
which yields $A(n)=A(rad(n))$. The technical core of the result is a series of upper bounds for the entries $w_{ij}$ of the inverse matrix $V_{\Phi_n}^{-1}$, of which the most important is:
$$
|w_{ij}|\leq 2rad(n)n^{2^k+k}A(n).
$$
\end{proof}
Now, to obtain the polynomial bound, we need to bound $A(rad(n))$, which we do with the aid of a classical result due to Bateman:

\begin{thm}[Bateman, \cite{bateman2}] Let $n=p_1...p_k$ with $p_1<...<p_k$. Then
$$
A(n)\leq  n^{2^{k-1}}. 
$$
\label{thacta}
\end{thm}
We can now derive the polynomial bound:
\begin{cor}[\cite{blanco} Cor. 3.11] Let $k\geq 1$ be fixed. If $n$ is the product of at most $k$ different primes, then $Cond(V_{\Phi_n})$ is polynomial in $n$. More in general, let $\mathcal{F}_k$ be a family of cyclotomic polynomials whose degree is divisible by at most $k$ different primes.  Assume that $A(n)=\mathcal{O}(n^r)$ for polynomials in $\mathcal{F}_k$. Then,
$$
Cond(V_{\Phi_n})=\mathcal{O}(n^{2^k+k+3+r}).
$$
\label{general}
\end{cor}
In \cite{blanco}, we also give a subexponential upperbound for the condition number if we do not fix the number of primes as well as more precise upper bounds for conductor divisible up to three primes. Namely:
\begin{thm}For $n\geq 1$ and $m=\phi(n)$, the following bounds hold for the condition number of cyclotomic polynomial $\Phi_n(x)$:
\begin{itemize}
\item[a)] (\cite{blanco} Thm. 4.1) If $n=p^k$ then $$
Cond(V_{\Phi_n})\leq 4(p-1)m.
$$

\item[b)] (\cite{blanco} Thm. 4.3) If $n=n=p^rq^l$ then
$$
Cond(V_{\Phi_n})\leq 2\phi(rad(n))m^2.
$$
\item[c)] (\cite{blanco} Thm. 4.6) If $n=p^lq^sr^t$ then 
$$
Cond(V_{\Phi_n})\leq 2\phi(rad(n))^2m^2.
$$
\end{itemize}
\end{thm}
In our proofs, apart from some of the ideas from \cite{RSW}, and some properties from cyclotomic polynomials from \cite{was}, we have used results from analytic number theory like the aforementioned Theorem \ref{thacta} due to Bateman and for the case of two and three primes, results by Migotti and Bang (\cite{bang}). This should highlight the strong link between ring lattice-based cryptography and number theory.

\subsection{The LPR (Lyubashevsky, Peikert and Regev) RLWE-cryptosystem}

Both RLWE and PLWE problems can be turned into public key cryptosystems, as we show next. We will focus in the PLWE version here. So, let $f(x)\in\mathbb{Z}[x]$ be a monic irreducible polynomial, $q$ a prime and set $\mathcal{O}=\mathbb{Z}[x]/f(x)$. Let $\chi_{\alpha}$ be an $\mathcal{O}/q\mathcal{O}$-valued discrete Gaussian (seen as taking values on $\mathbb{T}_f$) and as explained in the former subsection, we assume the parameter of $\chi_{\alpha}$ upper bounded entry-wise by $\alpha n^{1/4}$ with $\alpha\leq \sqrt{log(n)/n}$ and $q=q(n)$ as in \ref{thm1}. Take $n$ big enough so that $6\sqrt{log(n)/\sqrt{n}}<\frac{q}{4}$ (this will be used to grant the correctness of the cryptosystem).

\begin{construction}[The PLWE cryptosystem]

\begin{itemize}
\item[ ]
\item[1.] Key generation: choose $a\in\mathcal{O}/q\mathcal{O}$ uniformly at random and choose $s,e$ sampled from $\chi_{\alpha}$. The secret key will be $s$ and the public key will be the pair $(a,b=as+e)$.
\item[2.] Encryption: take a plaintext $z$ consisting of a stream of bits and regard it as a polynomial in $\mathcal{O}/q\mathcal{O}$, mapping each bit to a coefficient. Choose $r,e_1,e_2$ sampled from $\chi_{\alpha}$. Set $u=ar+e_1$ and $v=br+e_2+ \lfloor \frac{q}{2} \rfloor z$. The cyphertext is $(u,v)$.
\item[3.] Decryption: On cyphertext $(u,v)$, perform $v-us=er+e_2-e_1s+\lfloor \frac{q}{2} \rfloor z$ and round each coefficient either to zero or to $\lfloor \frac{q}{2} \rfloor$, whichever is closest mod $q$.
\end{itemize}
\end{construction}

\begin{prop}The PLWE cryptosystem is correct (i.e. decryption undoes encryption) and pseudorandom.
\end{prop}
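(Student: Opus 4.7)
The proof splits into two parts, correctness and pseudorandomness, which are essentially independent.

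For correctness I would begin by computing the quantity the decryption algorithm produces. Expanding $v-us$ and substituting $b=as+e$, $u=ar+e_1$, one finds
\[
v-us \;=\; (as+e)r + e_2 - (ar+e_1)s + \lfloor q/2\rfloor z \;=\; er + e_2 - e_1 s + \lfloor q/2\rfloor z.
\]
Thus correctness reduces to showing that every coefficient of the error polynomial $\varepsilon:=er+e_2-e_1s$ has absolute value strictly less than $\lfloor q/4\rfloor$ with overwhelming probability; then each coefficient of $v-us$ lies closer to $0$ or to $\lfloor q/2\rfloor$ according as the corresponding bit of $z$ is $0$ or $1$. To bound $\varepsilon$, I would work in the canonical embedding, where multiplication and addition are componentwise. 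Standard Gaussian tail estimates (using the running hypothesis that the parameter of $\chi_\alpha$ is entrywise at most $\alpha n^{1/4}$) give that each component of $e,r,s,e_1,e_2$ is bounded by $\alpha n^{1/4}$ with overwhelming probability. Consequently each component of $\varepsilon$ in the canonical embedding is bounded by
\[
2(\alpha n^{1/4})^2 + \alpha n^{1/4} \;\leq\; \frac{2\log n}{\sqrt n} + \frac{\sqrt{\log n}}{n^{1/4}} \;<\; \frac{3\log n}{\sqrt n},
\]
using $\alpha^2<\log n/n$. For the cyclotomic choice $f(x)=\Phi_m(x)$, the canonical embedding differs from the coefficient embedding by an orthogonal scaling, so the same bound (up to an explicit constant) controls the coefficient representation; the assumed inequality $\tfrac{3\log n}{\sqrt n}<\tfrac{q}{4}$ then closes the argument. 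The main nuisance here is the transfer between the two embeddings; this is where one must work case-by-case (or restrict to power-of-two cyclotomics where the transfer is a clean scaled isometry), and it is the only genuinely delicate calculation.

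For pseudorandomness I would use a two-step hybrid argument based directly on Theorem \ref{thm1}. Let $\mathcal{H}_0$ denote the real ciphertext distribution $(a,b,u,v)=(a,\,as+e,\,ar+e_1,\,br+e_2+\lfloor q/2\rfloor z)$. In $\mathcal{H}_1$ I replace $b$ by an element $b'\in R_q$ drawn uniformly at random; decisional RLWE with secret $s$ yields $\mathcal{H}_0\approx_c\mathcal{H}_1$. In $\mathcal{H}_1$ the pair $(a,u)=(a,ar+e_1)$ together with $(b',b'r+e_2)$ constitutes two RLWE samples under the common fresh secret $r$, so applying decisional RLWE once more (in the multi-sample form, which follows from the single-sample form by a standard hybrid over samples) I may replace $(u,\,b'r+e_2)$ by a uniformly random pair $(u',v'')\in R_q\times R_q$, giving hybrid $\mathcal{H}_2=(a,b',u',v''+\lfloor q/2\rfloor z)$. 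Since $(a,b',u',v'')$ is uniform on $R_q^4$ independently of $z$, the message $\lfloor q/2\rfloor z$ is perfectly masked and $\mathcal{H}_2$ is uniform on $R_q^4$; in particular the ciphertext $(u,v)$ is computationally indistinguishable from uniform, which is the statement of pseudorandomness.

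The two parts are essentially disjoint: correctness is a concentration statement about products of discrete Gaussians, while pseudorandomness is a formal reduction to the decisional RLWE assumption. I expect the analytic coefficient bound in the correctness step to be the principal obstacle, because one must be careful about (i) the distinction between the parameter and the actual standard deviation of $\chi_\alpha$, (ii) the passage from the canonical to the coefficient embedding, and (iii) controlling products rather than sums of Gaussian-distributed ring elements; the pseudorandomness reduction, by contrast, is a routine two-hybrid manipulation once Theorem \ref{thm1} is in hand.
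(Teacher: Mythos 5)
Your overall architecture matches the paper's proof: the same identity $v-us=er+e_2-e_1s+\lfloor q/2\rfloor z$ followed by a concentration bound for correctness, and the same two-hybrid reduction (first randomize the public key $(a,b)$, then observe that $(a,u)$ and $(b,v)$ are RLWE samples under the fresh secret $r$) for pseudorandomness. However, there are two concrete problems. First, the closing inequality in your correctness bound is false for large $n$: with $\alpha<\sqrt{\log n/n}$, the linear term $\alpha n^{1/4}\le \sqrt{\log n}/n^{1/4}$ \emph{dominates} the product terms $2\alpha^2\sqrt n\le 2\log n/\sqrt n$ (their ratio is $n^{1/4}/\sqrt{\log n}\to\infty$), so the sum is not eventually below $3\log n/\sqrt n$. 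In addition, a Gaussian of parameter $\sigma$ exceeds $\sigma$ with constant probability, so ``overwhelming probability'' costs an extra $\omega(\sqrt{\log n})$ factor in each bound. The paper's own numerology is admittedly loose here (its proof quotes the bound $3\log(n)q/n$ while its setup imposes $3\log(n)/\sqrt n<q/4$), but as written your chain of inequalities does not close.

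Second, in the pseudorandomness step you invoke decisional RLWE ``with secret $s$'' and again ``under the common fresh secret $r$'', yet both $s$ and $r$ are sampled from $\chi_\alpha$, not uniformly from $R_q$, whereas Theorem \ref{thm1} is stated for a uniformly random secret. The paper bridges this with the observation that RLWE samples remain pseudorandom when the secret is drawn from the error distribution, via the transformation to Hermite normal form; without that lemma, neither of your two hybrid steps is licensed by the theorem as stated. This is a standard and easily cited fact, but it is the one genuine ingredient of the paper's pseudorandomness argument that your write-up omits.
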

\begin{proof}
For correctness, notice that for the chosen values of $\alpha$, $q$ and $n$, with arbitrarily large probability, the absolute values of the coefficients of $er+e_2-e_1s$ will be below $6\sqrt{log(n)/\sqrt{n}}<q/4$, so each bit of $z$ can be recovered by checking if its position in $v-us$ is less than $\lfloor q/4 \rfloor$, in which case, we decrypt it as $0$, and otherwise as $1$, as in the LWE scheme.

For pseudorandomness, first note that RLWE samples are pseudorandom even when $s$ is sampled from $\chi_{\alpha}$, by a transformation to the Hermite normal form. Therefore, public keys $(a,b)$ are pseudorandom and we can replace them by a uniform pair in $\mathcal{O}\times\mathbb{T}_f$. The observations of a passive adversary are $(a,u)$ and $(b,v)$ which are also pseudorandom, since $r$ is also sampled from $\chi_{\alpha}$.
\end{proof}

\begin{example}For around 100 bits security, current implementations use a parameter set with number field degree $n=256$, a 13-bit prime modulus $q$ and a narrow discrete Gaussian distribution with diagonal entries upper-bounded by 4.5.
\end{example}

\subsection{A RLWE-based key exchange protocol} Next we present a key exchange protocol based on RLWE and due to Ding (\cite{dingkex}). Earlier protocols for key transport were proposed by Peikert (\cite{peikertkex}) in 2012 and by Zhang in 2014. This protocol takes place between two devices typically called \emph{initiator} and \emph{respondent}, which we will call Alice and Bob respectively, for the sake of tradition, both of which have access to a discrete Gaussian of parameter $\alpha$ and both of which know $m=\phi(n)$, a prime $q$, the $n$-th cyclotomic polynomial $\Phi_n(x)$, hence the rings $\mathcal{O}=\mathbb{Z}[x]/(\Phi_x(x))$ and $\mathcal{O}/q\mathcal{O}$, and another polynomial $a(x)\in\mathcal{O}/q\mathcal{O}$. These data can and must be assumed to be publically known. The algorithm uses the following two functions:

\begin{defn}[Signalling and binary deletion functions] 
\hfill \break
Let $E:=\{-\lfloor\frac{q}{4}\rfloor,...,0,...,\lfloor\frac{q}{4}\rfloor\}$. The signalling function, denoted $Sig$, is the characteristic function of $\mathbb{F}_q\setminus E$, namely $Sig(v)=0$ if and only if $v\in E$, otherwise $Sig(v)=1$. The  binary deletion function is defined as 
$$
\begin{array}{ccc}
Mod_2: \mathbb{F}_q^2 & \longrightarrow & \mathbb{F}_2\\
(v,w) & \mapsto & v-\frac{w}{2}\pmod{2}.
\end{array}
$$
\end{defn}
The signalling function \emph{signals} the elements of $E$ as \emph{small}, returning $0$, while the binary deletion function returns $0$ on pairs $(v,w=Sig(v))$ corresponding to error bits, which belong to $E$ (i.e. $w=0$ and $v=2k$). The steps of the protocol are as follows:

\begin{itemize}
\item[1. ] Alice initiates:
\begin{itemize}
\item[1.1 ] Generates two polynomials $s_A$ and $e_A$ from the discrete Gaussian distribution $\chi_{\alpha}$.
\item[1.2. ] Computes $p_A=as_A+2e_A$.
\item[1.3. ] Sends Bob the polynomial $p_A$.
\end{itemize}
\item[2. ] Bob responds:
\begin{itemize}
\item[2.1 ] Generates two polynomials $s_B$ and $e_B$ from the discrete Gaussian distribution $\chi_{\alpha}$.
\item[2.2. ] Computes $p_B=as_B+2e_B$.
\item[2.3. ] Generates $e'_B$ from $\chi_{\alpha}$ and computes 
$$
k_B=p_As_B+2e'_B=as_As_B+2e_As_B+2e'_B.
$$
\item[2.4. ] Uses the signalling function to find $w=Sig(k_B)$ (applying $Sig$ coefficientwise to $k_B$)
\item[2.5 ] Performs $sk_B=Mod_2(k_B,w)$
\item[2. 6] Sends Alice $(p_B,w)$.
\end{itemize}
\item[3. ] Alice finishes:
\begin{itemize}
\item[3.1 ] Generates $e'_A$ from $\chi_{\alpha}$.
\item[3.2. ] Computes 
$$
k_A=p_Bs_A+2e_A'=as_As_B+2e_Bs_A+2e_A'
$$
\item[3.3. ] Alice performs $sk_A=Mod_2(k_A,w)$.
\end{itemize}
\end{itemize}
Notice that the elements $k_A$ and $k_B$ are only approximately equal, up to even errors, which allows the fuction $Mod_2$ to detect them. The $Sig$ function indicates the region in which each coefficient of a polynomial lies and helps to make sure that the error terms in $k_A$ and $k_B$  do not result in different mod $q$ operations. 

With a carefull choice of the parameter $\alpha$, it will be $sk_A=sk_B$ with overwhelming probability. The difficulty of breaking this scheme is that from $p_A$ and/or $p_B$, which is the only thing which a passive adversary is supposed to see, to recover $s_A$ and $s_B$, the adversary must break PLWE.
\begin{rem}In November 2015, Alkim, Ducas, P\"opplemann, and Schwabe recommended the parameters $n = 1024$ and $q =12289$ (see \cite{ADPS}). This represents a significant reduction in public key size over previous schemes, and was submitted to NIST with the name of NewHope. At the time of writing, NewHope has passed unbroken to the second round (see Section 7.3).
\end{rem}

\section{Attacks on RLWE}Detailed reports on the state of the art of attacks on the RLWE cryptosystem can be found in \cite{lauter} and \cite{peikert}. In \cite{lauter} the authors discuss a list of open questions in algebraic number theory motivated by several attacks on RLWE. This interplay between cryptography and number theory constitutes a fruitful link which is expected to motivate a flow of results from each direction to the other. 

On the other hand, in \cite{peikert}, a comprehensive review of the known attacks and vulnerable instantiations is carried out from a geometric viewpoint. We present, at our introductory level, only a few of these attacks and questions, working out some details. Within this subsection, we assume as usual that $K$ is a number field of degree $n$, and in the PLWE setting, that the defining polynomial $f(x)$ splits totally over $\mathbb{F}_q[x]$. This is unnecessary but it will simplify the exposition, while keeping the essential facts.
\subsection{Reduction to LWE}
Let $\mathcal{B}$ be a $\mathbb{Z}$-basis of $\mathcal{O}$ such that its reduction modulo $q$, $\overline{\mathcal{B}}$, is an $\mathbb{F}_q$-basis of $\mathcal{O}/q\mathcal{O}$. Given $a\in\mathcal{O}/q\mathcal{O}$, multiplication by $a$ is an $\mathbb{F}_q$-linear map described by a matrix $A_a\in\mathbb{F}_q^{n\times n}$ with respect to $\overline{\mathcal{B}}$. Hence, a public key $(a,b=as+e)$ has attached the pair $(A_a,\textbf{b}=A_a\textbf{s}+\textbf{e})$, where $\textbf{s}$ and $\textbf{e}$ are, respectively, the coordinates of $s$ and $e$ with respect to $\overline{\mathcal{B}}$, which implies that one RLWE sample carries $n$ LWE samples.
A first attack is based on Case 2 in Section 3.2: if the $j$-th error coordinate with respect to $\overline{\mathcal{B}}$ does not wrap around $\mathbb{Z}$, namely, if $Pr_{e_j\leftarrow\chi}\left\{e_j\not\in [\frac{1}{2},\frac{1}{2}) \right\}$ is small enough, we have errorless LWE in the $j$-th row of $A_a$, and with enough samples we can recover $s$ with high probability.

Let now $\frak{q}\subseteq\mathcal{O}_K$ a prime ideal above $q$ of norm $N(\frak{q})=|\mathcal{O}_K/\frak{q}|$ and let $\chi$ be a Gaussian distribution over $K_{\mathbb{R}}=K\otimes\mathbb{R}$. Given RLWE samples $(a,b=as+e)$ where $a\in \mathcal{O}_K/q\mathcal{O}_K$ and $e$ taken from $\chi$, we can reduce them modulo $\frak{q}$ to obtain samples $(a'=a\pmod{\frak{q}}, b'=b\pmod{\frak{q}})$, with $b'=s'a'+e\pmod{\frak{q}}$ with $s'=s\pmod{\frak{q}}$, hence the secret now lies in a set of size $N(\frak{q})$. The following analysis is due to Peikert (cf. \cite{peikert} Section 3.2) and yields a potentially successful attack when $N(\frak{q})$ is not too large:

\begin{itemize}
\item[1. ]Since reduction modulo $\frak{q}$ takes uniform samples onto uniform samples, if $\chi$ modulo $\frak{q}$ is detectably non-uniform, we have an attack against decission RLWE.
\item[2. ]If $\chi$ has one or more coefficients that do not wrap around $\mathbb{Z}$, then we can attack search RLWE by reducing to errorless LWE and try arbitrarily many samples.
\end{itemize}

In all cases (both in LWE and RLWE), the insecurity of an instantiation is due to the fact that the error distribution is insufficiently well spread relative to the ring geometry, so, as in Section 3.2, the main lesson to learn here is that the error distribution should be taken with parameters as close as possible to those for which the the hardness theorem works (Theorem \ref{thm1}).

\subsection{Reduction and attack to PLWE}

A first fact to mention is that at the time of writing, there is no direct attack against RLWE, i.e., without a reduction to an attack on PLWE or LWE, as described in the previous subsection. So, all the attacks presented here attemp at breaking PLWE first and then to reduce RLWE to PLWE.

\begin{thm}[Elias et al. \cite{lauter}]If $K$ satisfies the following six conditions, there is a polynomial time attack to the search version of the associated RLWE scheme:
\begin{itemize}
\item[1.]$K=\mathbb{Q}(\beta)$ is Galois of degree $n$.
\item[2.]The ideal $(q)$ splits totally in $\mathcal{O}_K$.
\item[3.]$K$ is monogenic, i.e, $\mathcal{O}_K=\mathbb{Z}[\beta]$.
\item[4.] The transformation between the canonical embedding of $K$ and the power basis representation of $K$ is given by a scaled orthogonal matrix.
\item[5.]If $f$ is the minimal polynomial of $\beta$, then $f(1)\equiv 0 \pmod{q}$ .
\item[6.]The prime $q$ can be chosen suitably large. 
\end{itemize}
\end{thm}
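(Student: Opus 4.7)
My plan is to reduce RLWE over $K$ to PLWE for $f(x)$ and then exploit a small-kernel evaluation homomorphism at $x=1$ to recover the secret one coordinate at a time, finally reassembling via the Chinese Remainder Theorem.

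First I would use conditions 1-3 to identify RLWE samples in $R_q=\mathcal{O}_K/q\mathcal{O}_K$ with PLWE samples in $\mathbb{F}_q[x]/(f(x))$. Condition 3 ($\mathcal{O}_K=\mathbb{Z}[\beta]$) makes this identification exact, while condition 4 guarantees that an elliptic Gaussian error in the canonical embedding pulls back to a Gaussian in the power basis with coefficient-wise parameter of the same order (up to the common scalar), so with overwhelming probability each coefficient of $e$ has magnitude at most some $B=\tilde O(\alpha q)$. Thus the RLWE oracle produces PLWE pairs $(a, as+e)$ with small power-basis error.

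Second, I would exploit condition 5: since $f(1)\equiv 0\pmod q$, the ideal $\mathfrak{q}_1=(q,\beta-1)$ is one of the $n$ primes in the totally-split factorisation $q\mathcal{O}_K=\prod_{i=1}^n \mathfrak{q}_i$ granted by condition 2, and $\mathcal{O}_K/\mathfrak{q}_1\cong\mathbb{F}_q$ via $g(\beta)\mapsto g(1)$. Applying this evaluation homomorphism coordinate-wise to a sample gives $(a(1),b(1))\in\mathbb{F}_q^2$ with $b(1)=a(1)s(1)+e(1)\pmod q$, where $e(1)=\sum_{j=0}^{n-1}e_j$ is a sum of $n$ small sub-Gaussian integers of magnitude at most $O(\sqrt{n}B)$ with overwhelming probability. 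By condition 6 we may take $q$ large enough that $O(\sqrt{n}B)\ll q$, so $e(1)\bmod q$ stays sharply concentrated near $0$, while for every incorrect guess $g\neq s(1)$ the quantity $b(1)-a(1)g=a(1)(s(1)-g)+e(1)$ is uniform in $\mathbb{F}_q$, because $a(1)$ is uniform and $s(1)-g\in\mathbb{F}_q^{*}$.

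The core algorithmic step is then to loop over all $g\in\mathbb{F}_q$ and, for each, apply a standard statistical test (say a $\chi^{2}$ test or a simple count of samples falling in a short interval around $0$) to the list $\{b_i(1)-a_i(1)g\}_i$. The correct $g=s(1)$ is the unique value for which concentration around $0$ is observed; a polynomial (in $n$ and $\log q$) number of samples suffices by a Chernoff/Hoeffding bound combined with a union bound over the $q$ candidates, which is absorbed thanks to condition 6. Having recovered $s\bmod\mathfrak{q}_1$, I would then invoke condition 1: for each $\tau\in\mathrm{Gal}(K/\mathbb{Q})$ the pair $(\tau(a_i),\tau(b_i))$ is a fresh RLWE sample with secret $\tau(s)$ and error $\tau(e_i)$, and by condition 4 the distribution of $\tau(e_i)$ in the power basis is identical to that of $e_i$; the same procedure then yields $\tau(s)(1)$, i.e.\ $s\bmod \mathfrak{q}_i$ for every $i$. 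The Chinese Remainder Theorem finally assembles $s\bmod q\mathcal{O}_K$.

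The main obstacle I expect is the rigorous quantitative control of the reduction steps: proving that the transported error has sub-Gaussian coefficients after \emph{each} Galois translate (which is precisely what condition 4 is designed to secure, since a scaled orthogonal change of basis is an isometry that preserves Gaussianity and coordinate variances), and checking that the $\chi^{2}$ test achieves inverse-polynomial distinguishing advantage for the correct guess while failing for the remaining $q-1$ guesses, uniformly, after the union bound. Fixing explicit constants so that the hypothesis ``$q$ suitably large'' becomes an effective inequality of the form $q\ge C\sqrt{n}B\log q$ should complete the argument.
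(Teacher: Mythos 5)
Your proposal is correct and follows essentially the same route as the paper: identify RLWE with PLWE using monogenicity and the scaled-orthogonal change of basis (conditions 3--4), evaluate at the root $1$ of $f$ modulo $q$ (condition 5), and isolate the unique guess $g=s(1)$ for which $b(1)-a(1)g$ concentrates near $0$ rather than being uniform, with $q$ large enough to separate the two regimes yet small enough to loop over (condition 6). The only presentational difference is that the paper states the core algorithm as a distinguisher for decision PLWE and delegates conditions 1--2 to a black-boxed search-to-decision reduction, whereas you unroll that reduction explicitly via the transitive Galois action on the primes above $q$ and the Chinese Remainder Theorem --- which is exactly the mechanism the cited reduction uses.
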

The first two conditions are sufficient for the RLWE search-to-decision reduction in the case where $q \nmid \left[ \mathcal{O}_K: \mathbb{Z}[\beta]\right]$, which is implied by the third condition. The third and fourth conditions are sufficient for the RLWE-to-PLWE reduction; indeed, the fourth condition can be relaxed to require that the condition number of the matrix describing the transformation between the embeddings is at most polynomial in $n$, as we discussed in the previous section.

Finally, the last two conditions are sufficient for the attack on PLWE. Unfortunately (for the attacker's point of view), it is difficult to construct number fields satisfying all six conditions simultaneously. Next, we explain the attack on PLWE if 5 and 6 hold. 

Setting a s usual $\mathcal{O}=\mathbb{Z}[x]/(f(x))$, fix a public key $(a(x),b(x))\in\mathcal{O}/q\mathcal{O}\times \mathbb{T}_f$ and a secret key $s(x)\in \mathcal{O}/q\mathcal{O}$, i.e $b(x) = a(x)s(x) + e(x)$  with $e(x)$ sampled from the discrete Gaussian $\chi$.

For each root $\theta\in\mathbb{F}_q$ of $f(x)$, consider the projection $\pi_{\theta}:\mathcal{O}/q\mathcal{O}\to\mathbb{F}_q$ given by $p(x)\mapsto p(\theta)$. By \emph{short vector} in $\mathcal{O}/q\mathcal{O}$ we refer to those with \emph{small} coefficients, which in practice means that these are upper bounded, in absolute value, by $q/4$. For suitable parameter, these \emph{short vectors} lie inside a prescribed region with non-negligible probability and are easy to recognise. However, for a pair $(a(x), b(x))$, it is difficult to check if it exists $r(x)$ and a short vector $e(x)$ such that $b(x) = a(x)r(x) + e(x)$, in which case the attacker would guess that $s(x)=r(x)$. The reason is that there are $q^n$ possibilities for $s(x)$ to test, which is prohibitive. 

By contrast, in a small ring like $\mathbb{F}_q$, it is easy to examine the possibilities for $s(\theta)$ exhaustively:  we can loop through the possibilities for $s(\theta)$, obtaining for each guess $s_{\theta}$, the putative value $e(\theta) = b(\theta)-a(\theta)s_{\theta}$. The Decision Problem for PLWE, then, is solved as soon as we can recognize the set of $e(\theta)$ that arise from the Gaussian with high probability.

Again, this is difficult in general, but if 5 holds, i.e., if $\theta=1$ is a root of $f(x)$, the attacker has a chance:

Let us denote by $\mathcal{S}\subseteq \mathcal{O}/q\mathcal{O}$ the subset of polynomials that are produced by the Gaussian with non-negligible probability. This is a small set, due to the parameter choice. However, $\mathbb{F}_q$ is also a much smaller set than $\mathcal{O}/q\mathcal{O}$ and one expects that generically, $\pi_{\theta}(\mathcal{S})=\mathbb{F}_q$ or something very close. One says that in this case $\mathcal{S}$ \emph{smears} across all of $\mathbb{F}_q$. 

But we are supposing that $\theta= 1$. The polynomials $g(x)\in\mathcal{S}$ have small coefficients, and hence have small images $g(1)\in\mathbb{F}_q$. This is simply because $n$ is much smaller than $q$, due to 6, so that the sum of $n$ small coefficients is still small modulo $q$. These ideas can be turned into the following algorithm:

\begin{alg} Suppose $f(1)\equiv 0 \pmod{q}$. The input is a collection of pairs $\left\{(a_i(x),b_i(x))\in \mathcal{O}/q\mathcal{O}\times \mathbb{T}_f\right\}_{i=1}^m$, where each sample is drawn either uniformly at random or from the PLWE distribution . The output is to decide, for each sample, from which distribution is taken, with non-negligible probability. The algorithm is as follows: 

\begin{itemize}
\item[1 ] For $i=1$ to $m$ do 
\item[ ] Set $S = \mathbb{F}_q$. This is the first guess for $\pi_1(\mathcal{S})$, which will be updated after each iteration.

\begin{itemize}
\item[2] For each $s \in S$  do
\item[2.1] Compute $e_i := b_i(1)-s a_i(1)$;
\item[2.2] If $e_i$ is not small in absolute value modulo q, then conclude that the sample cannot be valid for $s$ with nonnegligible probability, and update $S=S\setminus\left\{s\right\}$;
\item[ ] Next $s$;
\end{itemize}
\item[3] If $S = \emptyset$, conclude that the sample was random, otherwise declare the sample as valid;
\item[ ] Next i;
\end{itemize}
\end{alg}
\begin{rem} Notice that in the inner loop, if the sample is valid, then $e_i=e_i(1)=\displaystyle\sum_{j=1}^ne_{ij}$, and if $\sigma$ is the variance of $\chi$ (which is spherical with respect to our embedding, fixed beforehand), then, $e_i$ is sampled from a discrete Gaussian distribution of zero mean and parameter $\sqrt{n}\sigma$. The region of non-negligible probability for this Gaussian, can be taken to be
$$
\Lambda:=\{s\in\mathbb{F}_q: |s|<n\sigma^2\leq q/4\}.
$$
\label{remark1}
\end{rem}Notice that the cyclotomic cases are protected against this attack: $\theta= 1$ is never a root modulo $q$ of a cyclotomic polynomial of degree greater than 1 when $q$ is sufficiently large. However, with minor modifications, it is possible to extend the former attack to the case where $\theta$ has small order modulo $q$.  Indeed, denote by $r$ the order of $\theta$ modulo $q$. For an unknown polynomial $e(x)$, to decide from a known value $e(\theta)$ if $e(x)$ is sampled from a Gaussian distribution in a similar fashion as in Remark \ref{remark1} is more complicated. However, one can still take advantage of a small $r$, as we explain next.

For $e(x)=\displaystyle\sum_{i=0}^ne_ix^i$, set $n=rM+l$ with $0\leq l\leq r-1$. Define $e_{Mr+k}=0$ for $0\leq k\neq l\leq r-1$ and write

$$
e(\theta)=\sum_{i=0}^{r-1}\sum_{j=0}^Me_{jr+i}\theta^i.
$$
If $e(x)$ is sampled from a multivariate Gaussian with variance very close to $\sigma^2$, then each term $\sum_{j=0}^Me_{jr+i}$ is sampled from a 1-dimensional Gaussian of variance very close to $(M+1)\sigma^2$. This defines a \emph {smallness} region , which can be pre-stored as a look-up table
$$\Lambda=\{\rho=\sum_{i=0}^{r-1}\sum_{j=0}^M\rho_{jr+i}\theta^i\subseteq\mathbb{F}_q: |\rho_{jr+i}|\leq (M+1)\sigma^2\ll q/4\}$$ 
to look at, in order to guess tentative values of $e(\theta)$. With this observation, we can derive the following algorithm:

\begin{alg} Suppose $f(\theta)\equiv 0 \pmod{q}$. The input is a collection of pairs $\left\{(a_i(x),b_i(x))\in \mathcal{O}/q\mathcal{O}\times \mathbb{T}_f\right\}_{i=1}^m$, where each sample is drawn either uniformly at random or from the PLWE distribution . The output is to decide, for each sample, from which distribution is taken, with non-negligible probability. The algorithm is as follows: 

\begin{itemize}
\item[1 ] For $i=1$ to $m$ do 
\item[ ] Set $S = \mathbb{F}_q$; 
\begin{itemize}
\item[2] For each $s \in S$  do
\item[2.1] Compute $e_i := b_i(\theta)-s a_i(\theta)$;
\item[2.2] If $e_i\not\in\Lambda$, then conclude that the sample cannot be valid for $s$ with nonnegligible probability, and update $S=S\setminus\left\{s\right\}$;
\item[ ] Next $s$;
\end{itemize}
\item[3] If $S = \emptyset$, conclude that the sample was random, otherwise declare the sample as valid;
\item[ ] Next i;
\end{itemize}
\end{alg}

\begin{rem}The third attack described in \cite{lauter} is based on the size of the residue of $e_i(\theta)$ modulo $q$. Although here the errors may take on all values in $\mathbb{F}_q$, it may still be possible to notice if the distribution of samples is not uniform. The attacking algorithm is built on a delicate probability bound in the case that $\theta\neq \pm 1$ and the order of $\theta$ modulo $q$ is not small. For example, this third attack is successful for any irreducible polynomial of degree $n=2^6$, with $q$ of the order of $2^{50}$, $\sigma=8$ and $\theta=2$.
\end{rem}

\subsection{Some number theoretical open questions motivated by attacks on PLWE}

As seen before, being simultaneously Galois and monogenic, having $\theta=1$ as a root of the minimal polynomial modulo $q$ (or some other root of small order) and the non-smearing under the evaluation map $\pi_{\alpha}$ of the set of \emph{small} vectors in $\mathcal{O}/q\mathcal{O}$ can be regarded as weakness conditions to build a RLWE-based cryptosystem. We give next a list of number theoretical problems which are motivated by the search of security in RLWE-based primitives and are still open, up to date.

\begin{question}Are there any fields of cryptographic size (i.e. $n\geq 2^{10}$) which are Galois and monogenic, other than the cyclotomic number fields and their maximal real subfields? How can one construct such fields explicitly? Is it possible to test algorithmically both features?
\end{question}
Notice that for fields of cryptographic size, the discriminant is too big to test whether or not it is square free, hence to decide if it is monogenic. An algorithmic approach which circumvects this testing is not available at the time of writing. Although for fields of small degree, a complete characterisation may be feasible (sufficient and necessary conditions for a cubic number field have been found by Gras and Archinard), the situation is much different for large degree fields. For instance, cyclic extensions tend to be non-monogenic:

\begin{thm}Any cyclic extension $K$ of prime degree $n\geq 5$ is non-monogenic except for the maximal real subfield of the $(2l+1)$-th cyclotomic field. 
\end{thm}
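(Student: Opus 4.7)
The plan is to combine Kronecker--Weber with the conductor--discriminant formula to constrain when $\mathcal{O}_K = \mathbb{Z}[\alpha]$ can hold. By Kronecker--Weber, $K \subset \mathbb{Q}(\zeta_f)$ for $f$ the conductor of $K$. Class field theory shows that, since $K$ is cyclic of prime degree $n$, every prime $p$ ramifying in $K$ must either satisfy $p \equiv 1 \pmod{n}$ (tame case) or be $p = n$ (wild case), and the conductor $f$ is a product of suitable prime powers of such primes. The discriminant is then read off from $|d_K| = \prod_{\chi \neq 1} f_\chi$, where $\chi$ runs over the nontrivial characters of $\mathrm{Gal}(K/\mathbb{Q})$.

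Assume $\mathcal{O}_K = \mathbb{Z}[\alpha]$, so that $\mathrm{disc}(\alpha) = d_K$. Writing $G = \mathrm{Gal}(K/\mathbb{Q}) = \langle \sigma \rangle$, one has $\mathrm{disc}(\alpha) = \prod_{0 \le i<j \le n-1}(\sigma^i \alpha - \sigma^j \alpha)^2$. Normalise $\alpha$ by subtracting an integer (which does not alter $\mathbb{Z}[\alpha]$) so that $\mathrm{Tr}(\alpha)$ lies in a fixed fundamental domain. Decompose $\alpha$ as an integral combination of the Gaussian periods of length $[\mathbb{Q}(\zeta_f):K]$ attached to the unique subgroup of $(\mathbb{Z}/f\mathbb{Z})^\times$ fixing $K$. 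The strategy is then to produce a lower bound of the form $|\mathrm{disc}(\alpha)| \ge c(n,f)\, |d_K|$ with $c(n,f) > 1$ except in the claimed exceptional case, forcing a contradiction.

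The hardest part is the quantitative estimate ruling out all fields but the exceptional one, and the argument proceeds by cases on $f$. If $f$ has at least two distinct ramified primes, or involves wild ramification at $n$, an explicit computation of the index form in the Gaussian-period basis shows that no nonzero integral combination realises $\mathrm{disc}(\alpha) = d_K$; here the hypothesis $n \ge 5$ is essential to exclude the small degenerate configurations that occur for $n = 2, 3$. When $f = p$ is a single prime with $p \equiv 1 \pmod{n}$, a sharper analysis combining the index form with Pohst-style lower bounds on $\sum_i \sigma^i(\alpha)^2$ for integers of fixed trace shows that equality $|\mathrm{disc}(\alpha)| = |d_K|$ forces $p = 2n + 1$ and $K$ to be the maximal real subfield of $\mathbb{Q}(\zeta_{2n+1})$. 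In that remaining case, one verifies directly that $\alpha = \zeta_{2n+1} + \zeta_{2n+1}^{-1}$ satisfies $\mathbb{Z}[\alpha] = \mathcal{O}_K$ via a discriminant computation, completing the classification.
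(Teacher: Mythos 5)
The paper states this result without proof: it is a citation of a deep theorem of M.-N.~Gras (\emph{J.\ Number Theory} 23, 1986), mentioned in the survey only as evidence that Galois-and-monogenic fields are scarce. So there is no in-paper argument to compare against, and your proposal has to stand on its own. It does not: the entire mathematical content of the theorem is concentrated in the two sentences you flag as ``the hardest part,'' and there you only assert the conclusion. The claim that ``an explicit computation of the index form in the Gaussian-period basis shows that no nonzero integral combination realises $\mathrm{disc}(\alpha)=d_K$'' is not a computation anyone can routinely perform: for degree $n\geq 5$ the index form is a form of degree $n(n-1)/2$ in $n-1$ variables, and deciding that it never represents $\pm 1$ is precisely the (hard, generally open) problem of non-monogenicity that you are trying to solve. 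Likewise, the proposed inequality $|\mathrm{disc}(\alpha)|\geq c(n,f)\,|d_K|$ with $c>1$ is not something Pohst-style $T_2$-bounds deliver; those bounds control $|\mathrm{disc}(\alpha)|$ from below in terms of geometric invariants of $\alpha$, not in terms of $|d_K|$ with an explicit constant exceeding $1$ uniformly over all generators $\alpha$ of $\mathcal{O}_K$. The framing via Kronecker--Weber, the conductor--discriminant formula, and the observation that ramified primes satisfy $p\equiv 1\pmod n$ or $p=n$ are all correct, as is the verification that $\zeta_{2n+1}+\zeta_{2n+1}^{-1}$ generates the ring of integers of the exceptional field; but these are the easy bookkeeping steps.

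For reference, the known proof proceeds along a genuinely different line: if $\mathcal{O}_K=\mathbb{Z}[\theta]$, then $\prod_{i<j}(\sigma^i\theta-\sigma^j\theta)^2=d_K$ forces the quotients $(\sigma^i\theta-\sigma^j\theta)/(\sigma^k\theta-\sigma^m\theta)$ to be units, and these units satisfy multiplicative relations making them (essentially) exceptional units. Gras bounds the index of the subgroup they generate inside $\mathcal{O}_K^\times$ and plays this against the analytic class number formula for real abelian fields, which is what ultimately forces the conductor to equal $2n+1$. If you want to complete a proof, that unit-theoretic route is the one to pursue; the direct index-form attack you sketch is not known to close.
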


Another result in this direction is as follows:

\begin{thm}Let $n\geq $5 be relatively prime to $2,3$. There are only finitely many abelian number fields of degree $n$ that are monogenic. 
\end{thm}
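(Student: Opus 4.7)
The plan is to bound the conductor $f$ of $K$; once that is done the result follows, since there are only finitely many subfields of $\mathbb{Q}(\zeta_f)$ of any given degree $n$, and by the Kronecker--Weber theorem every abelian $K/\mathbb{Q}$ of degree $n$ lies in such a cyclotomic field with $f$ the conductor of $K$. The hypothesis $\gcd(n,6)=1$ will be used both to ensure that ramification at the primes dividing $n$ is tame and to eliminate small-degree exceptional behaviour.

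First I would translate monogenicity into a discriminant identity: $K$ is monogenic if and only if there exists $\alpha\in\mathcal{O}_K$ with $\operatorname{disc}(\alpha)=\Delta_K$, using $\operatorname{disc}(\alpha)=[\mathcal{O}_K:\mathbb{Z}[\alpha]]^2\Delta_K$. Combined with the conductor--discriminant formula $\Delta_K=\prod_{\chi}\mathfrak{f}(\chi)$, where $\chi$ ranges over the characters of $\operatorname{Gal}(K/\mathbb{Q})$, this reduces the problem to controlling the conductors of those characters. Expressing $\alpha$ in terms of the Gauss periods associated to the cyclotomic embedding $K\subset\mathbb{Q}(\zeta_f)$ provides a presentation in which the abelian Galois action on $\alpha$ is completely explicit.

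Second, for each prime $p\mid f$ I would examine the local generator conditions: the reduction of $\alpha$ modulo every prime of $\mathcal{O}_K$ above $p$ must generate the corresponding residue field extension, and if $g$ denotes the minimal polynomial of $\alpha$ then $g'(\alpha)$ must generate the local different. For tamely ramified primes these conditions yield explicit congruences on the coefficients of $g$ when $\alpha$ is written in the Gauss-period basis; under $\gcd(n,6)=1$ they cover essentially all primes that can occur, leaving a finite system of Diophantine constraints between the periods.

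The main obstacle is to rule out infinite families of conductors arising from subfields of $\mathbb{Q}(\zeta_{p^k})$ of degree $n$ as $p$ or $k$ varies. Following the strategy of M.-N. Gras, one argues that for $n\geq 5$ coprime to $6$ the monogenicity system imposed on Gauss periods admits only finitely many solutions, which forces both the radical and the prime-power exponents of $f$ to lie in a bounded set. Once the conductor is thus bounded, only finitely many subfields of $\mathbb{Q}(\zeta_f)$ have degree $n$, yielding the claim.
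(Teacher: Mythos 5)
The paper states this theorem without proof: it is a survey, and the result is quoted from the literature (it is due to M.-N.\ Gras), so there is no internal argument to compare yours against. Judged on its own terms, your proposal has the right outer shell but a hole exactly where the theorem lives. The reduction is fine: by Kronecker--Weber an abelian field of conductor $f$ sits inside $\mathbb{Q}(\zeta_f)$, there are only finitely many subfields of $\mathbb{Q}(\zeta_f)$ of degree $n$, so it suffices to show that monogenicity forces $f$ into a finite set; the index formula $\operatorname{disc}(\alpha)=[\mathcal{O}_K:\mathbb{Z}[\alpha]]^2\Delta_K$ and the conductor--discriminant formula are the right bookkeeping for that.

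The gap is in the decisive step. For each conductor $f$ you write down a system of congruences on the Gauss-period coordinates of $\alpha$ and then assert that ``the monogenicity system admits only finitely many solutions, which forces $f$ to lie in a bounded set.'' But the system changes with $f$; knowing that each individual system has finitely many solutions says nothing about which conductors admit a solution at all, and that is the entire content of the theorem. What Gras actually exploits is different in kind: if $\mathcal{O}_K=\mathbb{Z}[\theta]$, then the ratios $(\sigma^i\theta-\sigma^j\theta)/(\sigma^k\theta-\sigma^l\theta)$ are units, and the subgroup of $\mathcal{O}_K^{\times}$ they generate is compared with the circular (cyclotomic) units; monogenicity forces the index of this subgroup to be small, whereas the index of the circular units grows with the conductor, being governed by $h_K^{+}$ and the regulator, which admit analytic lower bounds. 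It is this quantitative, uniform-in-$f$ unit-index comparison --- not the local tame-ramification congruences --- that eliminates all but finitely many conductors; the hypothesis $\gcd(n,6)=1$ enters to control roots of unity and to exclude the genuinely infinite monogenic families in degrees $2$, $3$, $4$ and $6$. Without some such uniform obstruction your argument does not close.
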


\begin{question}Let $\theta$ be a root of  $f(x)$ modulo $q$. For which subsets $\mathcal{S}\subseteq R_q$ it is $\pi_{\theta}(\mathcal{S})=\mathbb{F}_q$? Or, at least, can one determine the conditions for non-smearing, like in the case when $\theta=1$ and $\mathcal{S}$ is a set of \emph{small} vectors in $\mathcal{O}/q\mathcal{O}$?
\end{question}
Finally, as seen before, polynomials with roots of small order modulo $q$ should be avoided. Again, cyclotomic polynomials are safe for attacks built on small order roots, as their roots have maximal order. The problem here is as follows:

\begin{question}For random polynomials $f(x)$ and random primes $q$ for which $f(x)$ has a root $\alpha$ modulo $q$, what can one say about the order of $\alpha$ modulo $q$?
\end{question}
A special instance of this question is this well-known open problem:
\begin{conj}[Artin]  Each $a\in\mathbb{Z}$ is a primitive root modulo infinitely many primes $q$ such that $a$ is not a perfect square or $-1$ modulo $4$. In fact the set of primes for which $a$ is a primitive root has density
$$
\prod_{p\mbox{ prime}}\left(1-\frac{1}{p(p-1)}\right).
$$
\end{conj}

\section{Ring Learning With Errors signatures, homomorphic encryption and some NIST figures}

\subsection{RLWE Digital Signatures}We present here a 2012 scheme by Gunyesu, Lyubashevsky and Poppelman (GLP \cite{glp}).  It has some advantages over more recent eﬃcient post-quantum digital signature proposals such as BLISS and Ring-TESLA, but although not broken, GLP as originally proposed is no longer considered to oﬀer strong levels of security. Building on GLP, A. Chopra presented GLYPH in 2017 another RLWE digital signature schemes: a special instantiation of GLP together with certain modification in the compressing and hash functions,. It is described in \cite{glph}, where a throughout analysis on its resistance to signature forgery, key-recovery, exhaustive and meet-in-the middle attacks is carried out. However, the main ideas on how to use RLWE for secure signature is already contained in GLP, hence as a first contact with the topic we have chosen this scheme.

We use the same terminology and notions as in Definition 2.3 and subsequent discussion, to which we refer the reader. This scheme uses PLWE in the cyclotomic ring $R_q=\mathbb{F}_q[x]/(\Phi_n(x))$ with $q$ an odd prime congruent to $1$ mod $4$ or a power of $2$.

A first difference to mention here is that instead of discrete Gaussians, the coefficients of \emph{small} polynomials are sampled uniformly from $\left\{-1,0,1\right\}$ modulo $q$. This version of RLWE, is called the Compact Knapsack Problem over ideal lattices, whose decisional version backs GLP. Secondly, the lengths of signatures must not exceed a prescribed parameter $n$, regardless of the size of the message to sign. To attain this, the scheme uses a) a hash function $H$\footnote{A hash function is $H:\cup_{r\geq 1}\mathbb{F}_2^r\to\mathbb{F}_2^{\kappa}$ with fixed $\kappa$. In GLP/GLYPH, a common choice for $H$ is the function SHA256.}, which accepts bit strings of arbitrary length and returns bit strings of bounded length, and b) a function $F$ from the target of $H$ to the set of polynomials of degree $m=\phi(n)$ with exactly $k$ of their coefficients having absolute in $\pm 1$ and the rest being zero such that the probability of mapping two hash outputs to the same sparse element is less than $1/2^{\lambda}$, where $\lambda$ is a security parameter. 

Hence, the procedure has a sampling rejection step, which ensures that the output signature is not exploitably correlated with the signer's secret key values: if the infinity norm of a signature polynomial exceeds a fixed bound, $\beta$, that polynomial will be discarded and the signing process starts again. This process will be repeated until the infinity norm of the signature polynomial is less than or equal $\beta=k-1$, where $k$ is the number of non-zero coefficients allowed in acceptable polynomials. 

Third, it is necessary to fix an injective map $I:R_q\to \mathbb{F}_2^N$, with $N\gg 1$. Last, the maximum degree of the signature polynomials will be $m-1$ so that there are $m$ coefficients. Typical values for $m$ are 512, and 1024. For $m=1024$, GLYPH sets $q = 59393$, $b=16383$ and $k=16$. The scheme is as follows:

\begin{itemize}
\item[1. ]Key generation:
	\begin{itemize}
		\item[1.1] Generate, uniformly, two small polynomials $s(x)$ and $e(x)$. The pair $(s(x),e(x))$ is the private key.
		\item[1.2] Compute $t(x) = a(x)s(x) + e(x)$, with $a(x)$ chosen uniformly at random. The public key is $(a(x),t(x))$.
	\end{itemize}
\item[2. ]Signature generation:
	\begin{itemize}
		\item[2.1] Input: a message $m(x)\in R_q$ and $(a(x),e(x),s(x))$
		\item[2.2] Generate two small polynomials $y_1(x)$ and $y_2(x)$.
		\item[2.3] Compute $w(x) = a(x)y_1(x) + y_2(x)$.
		\item[2.4] Set $\omega=I(w(x))$ and $\mu=I(m(x))$.
		\item[2.5] Compute $c(x) = F(H(\omega | |\mu))$. The symbol $||$ denotes concatenation of strings.
		\item[2.6] Compute $z_1(x) = s(x)c(x) + y_1(x)$ and $z_2(x) = e(x)c(x) + y_2(x)$.
		\item[2.7] While the infinity norms of $z_1(x)$ or $z_2(x)$ is  greater than $\beta$ go to step 2.1. 
		\item[2.8] Output: $(c(x), z_1(x),z_2(x))$. Transmit the signature along with the message $m(x)$. Notice that we are not discussing here signatures of encrypted messages, which is a more sophisticated cryptographic functionality.
	\end{itemize}
\item[3. ]Signature verification:
	\begin{itemize}
		\item[3.1] Input: $(c(x), z_1(x),z_2(x),m(x))$.
		\item[3.2] Verify that the infinity norms of $z_1(x)$ and $z_2(x)$ do not exceed $\beta$. If not, reject the signature.
		\item[3.3] Compute $w'(x) = a(x)z_1(x) + z_2(x) - t(x)c(x)$.
		\item[3.4] Set $\omega'=I(w'(x))$ and $\mu=I(m(x))$.
		\item[3.5] Compute $c'(x) = F(H(\omega'|| \mu))$.
		\item[3.6] Output: If $c'(x) \neq c(x)$ reject the signature, otherwise accept the signature as valid.
	\end{itemize}
\end{itemize}
Notice that $a(x)z_1(x) + z_2(x) - t(x)c(x)=w(x)$, hence $c'(x)=c(x)$ if the signature is not tampered, hence the scheme is correct.

\begin{rem}The private key $(s(x),e(x))$ can be represented in $2n\log_2(3)$ bits of memory, and the public key $a(x)s(x) + e(x)$ can be represented in $n\log_2(q)$ bits, which makes GLP feasible for practical implementations.
\end{rem}

\begin{rem}Both in \cite{glph} and in the earlier \cite{glp}, the application of the hash function $H$ may result unclear for a non experienced reader. The reason is that in what we have labeled steps 2.5 and 3.4, both schemes apply $H$, defined over a binary domain, to inputs which are not binary. This point is probably not taken very seriously by the experts, for all what matters is that $H$ is a collision resistant function and, more important, that when it comes to comparing $H(\omega | |\mu)$ with $H(\omega' | |\mu)$, they can only be equal with overwhelming probability if and only if $\omega=\omega'$. But of course one needs to make binary the arguments $w(x)$ and $m(x)$ of $H$, and this is why we have fixed the innaccuracy by resourcing to a function $I$ which injectively outputs binary strings on polynomial inputs and defined $\omega=I(w(x))$ and $\mu=I(m(x))$. In \cite{glp} page 6 it is discussed how forging a signature implies finding a collision on $H$.
\end{rem}
\subsection{RLWE Homomorphic encryption}Homomorphic encryption was first introduced by Rivest, Adleman and Dertouzos back in the 70's (\cite{he70s}), where they raised the problem of constructing a fully homomorphic scheme (a \emph{privacy homomorphism}, using their phraseology). This problem was solved by Craig Gentry in 2009 in its seminal paper \cite{gentry}, by using ideal lattices and (essentially) a modified version of PLWE. The possibility of cheap cloud computing and distributed storage has drastically changed how business and individuals process their data and although traditional encryption like AES are very fast, to perform even simple analytics on encrypted data requires either the cloud server to access the secret keys, leading to security concerns or to download the data, decrypt and operate, which is costly. Homomorphic encryption is the solution to this challenge.

Areas where homomorphic encryption has applications include e-voting systems (\cite{gama}) and processing or computing on encrypted health, financial or other kinds of sensitive data on external servers like cloud or distributed devices.

Homomorphic and fully homomorphic encryption(FHE)  has already been introduced here in Definition 2.5, and Example 2.6 provides an example of a homomorphic but not non-fully homomorphic encryption scheme. 

\begin{examples}Another example of homomorphic encryption is the LWE cryptosystem. To avoid entering into technicalities, choose an odd prime $q$, so that $2$ is invertible in $\mathbb{F}_q$. We observe that a LWE-oracle is \emph{essentially} homomorphic:
given a private key $s\in\mathbb{F}_q^n$, two uniformly sampled vectors $a_1,a_2\in\mathbb{F}_q^n$ and two errors $e_1,e_2$ taken from a $\mathbb{T}$-valued random variable $\chi$, of $0$-mean and variance $\sigma^2$, we se that
$$
(a_1,\langle a_1,s \rangle+e_1)+(a_2,\langle a_2,s \rangle+e_2)=(a_1+a_2,\langle (a_1+a_2),s \rangle+(e_1+e_2)).
$$
\emph{Essentially} means that the sum $e_1+e_2$ is taken from the variable $2\chi$, which has also $0$-mean but variance $2\sigma^2$. This easy observation allows to define a homomorphic cryptosystem, which is a minor modification of Regev's scheme presented in Section 3. However, if we keep adding encryption of data, this results in amplifying the error of the final encrypted data, and when this error passes a certain threshold, decryption becomes impossible. This implies that the length of the arithmetic circuit must be known beforehand and the parameters must be set to meet this feature.
\end{examples}

An analogous analysis as in the previous example shows that RLWE oracles are also essentially homomorphic both in the additive and multiplicative structure, where, again, essentially means that the error of the sum/product is an amplification of the individual errors of the encrypted data, hence, RLWE provides a FHE scheme, as we see next.

\begin{defn}[The BGV cryptosystem (\cite{bgv}, Section 3.4)] 
\hfill \break
Denote $R=\mathbb{Z}[x]/(\Phi_n(x))$, with $\Phi_n(x)$ the $n$-th cyclotomic polynomial and set $R_N:=R/NR$. Consider as the space of plaintexts the ring $R_{p^r}$, for fixed $r$ and prime $p$. The scheme is parametrized by a sequence of decreasing moduli $q_L>q_{L-1}>...>q_0$ such that $q_i\leq\min\left\{\sqrt{q_{i+1}},\frac{q_{i+1}}{2}\right\}$ and an $i$-th level ciphertext is a vector $(v,w)\in R_{q_i}^2$.
\begin{itemize}
\item[1. ]Key generation: Chose $s\in R$ by sampling from a discrete Gaussian such that the probability of the set $\left\{0,\pm 1\right\}^{\phi(n)}$ is close enough to $1$.
\item[2. ]Encryption/Decryption: A plaintext $\alpha\in R_{p^r}$ is encrypted to $E(s,\alpha)=(p_0,p_1)\in R_{q_i}^2$ if and only if $p_0+sp_1$ modulo $q_i$ equals $\alpha+p^r\epsilon$ in $R$ with $||\epsilon||<q_i/p^r$ for some $i\in\{0,...,L\}$.
\end{itemize}
\end{defn}

Observe that adding or multiplying two i-level ciphertexts results in an $i+1$-level ciphertext, so computations over level $L$-ciphertexts are not allowed, as they cannot be decrypted. Several recent refinements to this scheme have been proposed (\cite{halsh}) and the topic is still under research.

A number of open-source implementations of homomorphic encryption are available. For instance, HELib, a widely used library from IBM that implements the BGV cryptosystem, SEAL, a Microsoft version, $\Lambda O \lambda$ (pronounced \emph{LOL}), a Haskell library for ring-based lattice cryptography that supports FHE or PALISADE, a general lattice encryption library. it is possible to add new implementations after public review by contacting contact@HomomorphicEncryption.org. In sum, homomorphic encryption is already ripe for mainstream use but the lack of standardisation makes difficult to decide on which implementation to use.

\subsection{NIST figures} In 2017, the American National Institute of Standards and Technology (NIST), launched an open call (https://csrc.nist.gov/Projects/Post-Quantum-Cryptography) to evaluate and standardize one or more quantum-resistant public-key cryptographic algorithms. In their own words: 

\emph{The question of when a large-scale quantum computer will be built is a complicated one. While in the past it was less clear that large quantum computers are a physical possibility, many scientists now believe it to be merely a significant engineering challenge. Some engineers even predict that within the next twenty or so years sufficiently large quantum computers will be built to break essentially all public key schemes currently in use. Historically, it has taken almost two decades to deploy our modern public key cryptography infrastructure.  Therefore, regardless of whether we can estimate the exact time of the arrival of the quantum computing era, we must begin now to prepare our information security systems to be able to resist quantum computing.}

The deadline for submission was November 30, 2017. The total number of submissions (for encryption, key exchange and signatures) was 71. In the first round, 14 submissions were attacked or withdrawn. Of the remaining 57, some of the proposals (mainly code-based ones) did merge. Taking this into account, 50 proposals remained unbroken. Some of them were found to have non-fatal attacks, which can be avoided with a right choice of parameters, also in the first round. 

Of these 50 proposals: 9 were code-based, 21 lattice-based, 2 hash-based , 9 multivariate-based, 1 supersingular isogeny Diffie-Hellman (SIDH) key-exchange protocol. The remaining 8 submissions were hybrid or based on problems such as random walks (1), braids (2), Chebychev polynomials (1) or hypercomplex numbers (1).

In January 2019, a second round started and taking into account the attacks and feedback to the surviving proposals of the first round, 26 proposals have passed this new sieve. The numbers of remaining proposals (at the time of writing) within each category are listed in the following table, constructed out of data from https://www.safecrypto.eu/pqclounge/:

\begin{table}[htbp]
\begin{center}
\begin{tabular}{|l|l|}
\hline
Category & Number of proposals\\
\hline \hline
Code-based (Hamming) &  5 \\ \hline
Code-based (rank metric) &  2 \\ \hline
Lattice-based (LWE)  &  1 \\ \hline
Lattice-based (RLWE)  &  6 \\ \hline
Lattice-based (PLWE)  &  1 \\ \hline
Lattice-based (Other)  &  4 \\ \hline
Multivariate-based & 4 \\ \hline
Hash-based &  1 \\ \hline
Supersingular isogeny-based &  1 \\ \hline
Other &  1 \\ \hline
\end{tabular}
\caption{NIST proposals. Second Round.}
\label{tabla:sencilla}
\end{center}
\end{table}


\end{document}